\def\BibTeX{{\rm B\kern-.05em{\sc i\kern-.025em b}\kern-.08em
    T\kern-.1667em\lower.7ex\hbox{E}\kern-.125emX}}
\newcommand{\XX}{\mathcal{X}}
\newcommand{\YY}{\mathcal{Y}}
\newcommand{\LL}{\mathcal{L}}
\newcommand{\WW}{\mathcal{W}}
\newcommand{\DD}{\mathcal{D}}
\newcommand{\dist}{\mathbb{D}}
\newcommand{\TT}{\mathcal{T}}
\newcommand{\NN}{\mathbb{N}}
\newcommand{\FF}{\mathcal{F}}
\newcommand{\PP}{\mathcal{P}}
\newcommand{\Ss}{\mathcal{S}}
\newcommand{\A}{\mathcal{A}}
\newcommand{\C}{\mathcal{C}}
\newcommand{\ZZ}{\mathcal{Z}}
\newcommand{\E}{\mathbb{E}}
\DeclareMathOperator{\fst}{fst}
\DeclareMathOperator{\snd}{snd}
\newtheorem{theorem}{Theorem}
\newtheorem{lemma}[theorem]{Lemma}
\newtheorem{definition}[theorem]{Definition}
\newtheorem{proposition}[theorem]{Proposition}
\newtheorem{corollary}[theorem]{Corollary}
\newtheorem{problem}[theorem]{Problem}
\begin{document}

\title{Quantifying information flow in interactive systems\\
\thanks{The author is supported by FNR under grant number 11689058 (Q-CoDe).}}

\author{\IEEEauthorblockN{David Mestel}
\IEEEauthorblockA{\textit{University of Luxembourg} \\
david.mestel@uni.lu}
}

\maketitle

\begin{abstract}
We consider the problem of quantifying information flow in interactive systems,
modelled as finite-state transducers in the style of Goguen and Meseguer. Our
main result is that if the system is deterministic then the information flow is
either logarithmic or linear, and there is a polynomial-time algorithm to
distinguish the two cases and compute the rate of logarithmic flow. To achieve
this we first extend the theory of information leakage through channels to the
case of interactive systems, and establish a number of results which greatly
simplify computation. We then show that for deterministic systems the
information flow corresponds to the growth rate of antichains inside a certain
regular language, a property called the width of the language. In a companion
work we have shown that there is a dichotomy between polynomial and exponential
antichain growth, and a polynomial time algorithm to distinguish the two cases
and to compute the order of polynomial growth. We observe that these two cases
correspond to logarithmic and linear information flow respectively. Finally, we
formulate several attractive open problems, covering the cases of probabilistic
systems, systems with more than two users and nondeterministic systems where the
nondeterminism is assumed to be innocent rather than demonic.
%\keywords{First keyword  \and Second keyword \and Another keyword.}
\end{abstract}

\begin{IEEEkeywords}
Quantified information flow, automata theory
\end{IEEEkeywords}

\section{Introduction}
The notion of `noninterference' was introduced by Goguen and Meseguer in
\cite{goguen1982security}.  It has long been recognised, however, that this
condition---that no information can reach Bob about the actions of Alice---may
in some circumstances be too strong.  The field of quantitative information flow
therefore aims to compute the \emph{amount} of information that can reach Bob
about Alice's actions.

The contributions of this work are in two main parts.  In the first part we 
extend the theory of information flow through channels developed by Smith, 
Palamidessi and many others to the case of interactive systems.  In addition to 
basic definitions, we establish a number of results which greatly simplify 
computation.  In particular, we show that it suffices to consider probability 
distributions over deterministic strategies for the two parties and that one of 
them may be assumed to adopt a pure deterministic strategy.  We also show that 
if the system itself is deterministic then there is a possiblistic 
characterisation of the information flow which avoids quantifying over 
probability distributions altogether; this will be essential for the work of 
the second part.

In the second part we study determinstic interactive systems modelled as
finite-state transducers in the style of Goguen and Meseguer.  We define the 
information-flow capacity of such systems, before addressing the formidable 
technical problem of computing it.  The key idea is to show that this 
can be reduced to a certain combinatorial problem on
partially ordered sets. This problem is solved in a companion work 
\cite{mestel2018widths}, with the consequence that we are able to show (Theorem
\ref{thm:entcap}) that for such systems there is a dichotomy between logarithmic
and linear information flow, and a polynomial-time algorithm to distinguish the
two cases. These two cases are naturally interpreted as `safe' and `dangerous'
respectively, so we have shown that it is possible to distinguish genuinely
dangerous information flow.  We thereby accomplish a goal proposed by Ryan, 
McLean, Millen and Gligor at CSFW'01 in \cite{ryan2001noninterference}.

\subsection*{Overview}
In Section \ref{sec:channels} we first recall some relevant theory on the
information-flow capacity of channels, and improve a result of Alvim,
Chatzikokolakis, McIver, Morgan, Palamidessi and Smith giving an upper bound on
the `Dalenius leakage' of a channel to an exact formula (Theorem
\ref{thm:delan}).  We then consider interactive channels, where both parties may
be required to make choices.  We define leakage and information-flow capacity in
this setting, and show that Bob's strategy may be assumed to be deterministic
(Corollary \ref{cor:bobdet}).  We show (Theorem \ref{thm:aliceposs})
that in the case of deterministic channels we may take a possbilistic view of
Alice's actions, which we will find simplifies calculation considerably.  
Finally we show (Theorem \ref{thm:probstrat}) that for systems which may involve 
multiple rounds of interaction it suffices to consider probability distributions 
over deterministic, rather than probabilistic, strategies.

In Section \ref{sec:detint} we model deterministic interactive systems as
finite-state transducers, and define their information-flow capacity. We then
show how to reduce the problem of computing this to a problem involving only
nondeterministic finite automata, and then to the combinatorial problem of
computing the `width' of the languages generated by the relevant automata. We
observe that this problem is solved in \cite{mestel2018widths}, and consequently
conclude (Theorem \ref{thm:entcap}) that there is a dichotomy between
logarithmic and linear information flow, and a polynomial-time algorithm to
distinguish the two cases.  The structure of the sequence of reductions leading
to this theorem is summarised in Figure \ref{fig:chapstruct}, and we illustrate 
the theory by applying it to a simple scheduler.

In Section \ref{sec:future} we discuss some generalisations of the systems
studied in main part of this work: namely nondeterministic systems, systems with
more than two agents (which we observe encompasses the case of nondeterministic
systems), and probabilistic systems.  For the latter two we define the
information-flow capacity and formulate the open problems of computing it.  
Finally in Section \ref{sec:relatedwork} we discuss related work and in Section 
\ref{sec:conclusion} we conclude.

\section{Information-theoretic preliminaries}\label{sec:channels}

\subsection{Leakage through channels}\label{sec:simpchan}

We consider first the case of leakage through a channel from a space $\XX$ of
inputs to a space $\YY$ of outputs, corresponding to a situation in which the
attacker is purely passive: Alice selects an input according to a known prior
distribution $p_X$ and Bob (the attacker) receives an output according to the
conditional distribution $p_{Y|X}$, which specifies the channel.  How much
information should we say that Bob has received?

The first work on quantified information flow adopted the classical
information-theoretic notion of \emph{mutual information} introduced by Shannon
in the 1940s~\cite{shannon1948communication}.  However, Smith observed
in~\cite{smith2009foundations} the problems with this consensus definition.  The
essential problem is that mutual information represents in some sense the
average number of bits of information leaked by the system.  This is appropriate
for the noisy coding theorem, where we are interested in the limit of many uses
of the channel, but not for the case of information leakage where we assume that
the adversary receives only one output (or a small number of outputs).

This means that, in the example used by Smith, a system which leaks the whole
secret 1/8 of the time is seen as largely secure (because $H(X|Y)=\tfrac{7}{8}H(X)$),
although it allows (for instance) a cryptographic key to be guessed 1/8 of the
time.  Smith addresses this by adopting the \emph{min-entropy leakage}, 
defined\footnote{Smith and subsequent authors generally define leakage only for 
random variables whose images are finite sets.  However, their definitions are 
straightforwardly generalised to arbitrary discrete random variables by replacing 
$\max$ with $\sup$ where appropriate.  Except where noted, the proofs of all 
quoted results remain valid after the same modification.}
as the expected value of the increase in the probability of guessing the input
upon observing the output $y$:
\[\LL_\infty(X,Y) = \log \mathbb{E}_{y\sim Y} \frac{\sup_{x\in\XX}
p_{X|Y}(x|y)}{\sup_{x\in\XX} p_X(x)}.\]

Given a channel $\C$ specified by a matrix of conditional probabilities
$p_{Y|X}$, we may be interested in its \emph{capacity}, which is the maximum
value of the leakage over all possible priors $p_X$:
\[\LL_\infty(\C) = \sup_{(X,Y)\sim \C} \LL_\infty(X,Y),\]
where the notation $(X,Y)\sim \C$ means that $X$ and $Y$ are random variables
compatible with $\C$; that is, that the conditional probabilities $p_{Y|X}$
(where defined) correspond to the matrix defining $\C$.

In \cite{alvim2012measuring}, Alvim, Chatzikolakis, Palamidessi and Smith
generalise this definition to the notion of \emph{$g$-leakage}, in which Bob
makes a guess drawn from a set $\WW$, and receives a payoff according to the
function $g:\WW\times\XX \rightarrow [0,1]$.  The leakage with respect to $g$ is
then
\[\LL_g(X,Y) = \log \mathbb{E}_{y\sim Y} \frac{\sup_{w\in \WW} \sum_{x\in \XX}
p_{X|Y}(x|y)g(w,x)}{\sup_{w\in \WW} \sum_{x\in \XX} p_{X}(x)g(w,x)}.\]

Once again, we can define the capacity of a channel $\C$:
\[\LL_g(\C) = \sup_{(X,Y)\sim \C} \LL_g(X,Y).\]

In Theorem 5.1 of \cite{alvim2012measuring}, the authors prove the so-called
`miracle' theorem, which states that for any channel $\C$ and any gain function
$g$ we have that the $g$-capacity is at most the min-entropy capacity:
\[\LL_g(\C) \leq \LL_\infty(\C).\]

However, it may be the case that the secret which Bob is trying to guess is not
Alice's input but some other secret value (a cryptographic key, say) which is
related to $x$ in some known but unspecified way.  We may be interested in
bounding the possible gain for Bob for any possible secret and any
(probabilistic) relationship to the choice of $x$; this is sometimes known as
the `Dalenius leakage', after a desideratum attributed to T. Dalenius by Dwork
in \cite{dwork2006differential}.  We may therefore define
\[\LL_D(X,Y) = \sup_{Z \in \mathcal{D}} \LL_\infty(Z,Y),\]
where $\DD$ is the collection of random variables $Z$ such that $Z\rightarrow X
\rightarrow Y$ forms a Markov chain (that is, $p_{X,Y,Z}(x,y,z) =
p_Z(z)p_{X|Z}(x|z) p_{Y|X}(y|x)$).

In \cite{alvim2014additive}, Alvim, Chatzikokolakis, McIver, Morgan, Palamidessi
and Smith give an upper bound for the Dalenius leakage: they show in Corollary
23 that for any Markov chain $Z\rightarrow X\rightarrow Y$ we have that 
\[\sup_g \LL_g(Z,Y) \leq \sup_g \LL_g(Y,X),\]
where the suprema are taken over gain functions $g$.  Hence in particular we
have that $\LL_\infty(Z,Y) \leq \sup_g \LL_g(Y,X)$.  But $\LL_g(Y,X) \leq
\LL_g(\C)$, where $\C$ is any channel such that $(X,Y)\sim \C$, and by the
miracle theorem we have that $\LL_g(\C) \leq \LL_\infty(\C)$, and hence we have
that
\[\LL_D(X,Y) \leq \LL_\infty(\C).\]

We are able to improve this to a precise formula for the Dalenius leakage
between two random variables.

\begin{theorem}\label{thm:delan}
Let $X,Y$ be any discrete random variables.  Then 
\begin{align*}\LL_D(X,Y) &= \log \mathbb{E}_{y\sim Y} \sup_{x\in \XX} \frac{p_{Y|X}(y|x)}
        {p_Y(y)}\\
        &= \log \sum_{y\in \YY^+} \sup_{x \in \XX} p_{Y|X} (y|x),\end{align*}
where $\YY^+\subseteq \YY$ is the set of $y\in \YY$ such that $p_Y(y) > 0$.
\end{theorem}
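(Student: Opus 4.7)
The plan is to prove both equalities in the theorem. The second is immediate by unfolding the expectation: since the $p_Y(y)$ factors cancel, $\E_{y\sim Y}[\sup_x p_{Y|X}(y|x)/p_Y(y)] = \sum_{y\in\YY^+} \sup_x p_{Y|X}(y|x)$. So the work is in the first equality, which I would prove by matching upper and lower bounds.

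For the upper bound, given any Markov chain $Z\to X\to Y$ I would use the Markov property and Bayes to write
\[p_{Z|Y}(z|y) = \frac{1}{p_Y(y)}\sum_x p_{Z|X}(z|x)\,p_{Y|X}(y|x)\,p_X(x).\]
Pulling out $\sup_x p_{Y|X}(y|x)$ and using $\sum_x p_{Z|X}(z|x)p_X(x) = p_Z(z)$ yields
\[\sup_z p_{Z|Y}(z|y) \leq \frac{\sup_x p_{Y|X}(y|x)}{p_Y(y)}\,\sup_z p_Z(z).\]
Dividing by $\sup_z p_Z(z)$, taking $\E_y$ and then the logarithm gives $\LL_\infty(Z,Y)\leq \log\sum_y \sup_x p_{Y|X}(y|x)$, and the upper bound on $\LL_D(X,Y)$ follows by taking $\sup_{Z\in\DD}$. (This can also be deduced from the already-stated bound $\LL_D(X,Y)\leq \LL_\infty(\C)$ together with the classical formula $\LL_\infty(\C) = \log\sum_y \sup_x p_{Y|X}(y|x)$ for min-entropy capacity, but the direct argument is tidier.)

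For the lower bound I would construct a sequence $Z_\epsilon\in\DD$ whose leakage approaches the bound as $\epsilon\to 0$. The idea is to ``spread'' each value $x$ of $X$ uniformly over approximately $p_X(x)/\epsilon$ auxiliary copies, so that $Z_\epsilon$ is nearly uniform on a large set (forcing $\sup_z p_{Z_\epsilon}(z)$ to be small) while observing $Y$ still identifies the best $x$. Concretely, for each $x$ in the support of $X$ set $N_x = \lceil p_X(x)/\epsilon\rceil$, let $Z_\epsilon$ take values in $\{(x,i) : 1\leq i\leq N_x\}$ and set $P(Z_\epsilon=(x,i)\mid X=x) = 1/N_x$. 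Then $Z_\epsilon\to X\to Y$ is a Markov chain; direct computation gives $p_{Z_\epsilon}((x,i)) = p_X(x)/N_x \leq \epsilon$ and $p_{Z_\epsilon|Y}((x,i)|y) = (p_X(x)/N_x)\cdot p_{Y|X}(y|x)/p_Y(y)$. Since $p_X(x)/N_x\to \epsilon$ for every $x$ in the support as $\epsilon\to 0$, taking the ratio and expectation over $Y$ yields $\LL_\infty(Z_\epsilon,Y)\to \log\sum_y \sup_x p_{Y|X}(y|x)$.

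The main technical obstacle I anticipate is the case where the support of $X$ is infinite: then the set of $Z_\epsilon$-values is infinite and the suprema may not be attained, so the pointwise convergence $p_X(x)/N_x\to\epsilon$ is not uniform in $x$. To handle this I would first prove the equality when $X$ has finite support, and then pass to the general case by truncation, approximating each supremum by sups over finite subsets and invoking monotone (or dominated) convergence. This also covers the degenerate case $\sum_y \sup_x p_{Y|X}(y|x)=+\infty$, which should be interpreted as unbounded Dalenius leakage.
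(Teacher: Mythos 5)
Your proposal is correct and takes essentially the same route as the paper: the upper bound is in substance the min-entropy capacity formula $\LL_\infty(\C)=\log\sum_{y}\sup_{x} p_{Y|X}(y|x)$ pushed through the Markov chain, and the lower bound refines $X$ into a nearly uniform $Z$ with vanishing maximum probability that (asymptotically) determines $X$. The only difference is cosmetic: the paper builds $Z_n$ exactly uniform on $2^n$ dyadic cells of $[0,1)$ (so $Z_n$ only determines each fixed $x$ for large $n$), whereas you split each atom $x$ into $\lceil p_X(x)/\epsilon\rceil$ equal copies (so $Z_\epsilon$ determines $X$ exactly but is only approximately uniform), and both arguments conclude with the same fixed-$x$ limit followed by summation over $y$.
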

\begin{proof}
We may assume without loss of generality that $p_Y(y) > 0$ for all $y\in Y$
(otherwise redefine $\YY$ to be the set of values on which $p_Y$ is supported).

For the upper bound, we recall that Braun, Chatzikokolakis and Palamidessi
observe in Proposition 5.11 of \cite{braun2009quantitative} that there is a
simple formula for the min-entropy capacity of a channel $\C$ defined by matrix
$p_{Y|X}$:
\begin{equation}\label{eq:minentcap} \LL_\infty(\C) = \log \sum_{y\in \YY}
\sup_{x\in \XX} p_{Y|X}(y|x).\end{equation}

This is proved in \cite{braun2009quantitative} for random variables with finite 
image.  For general discrete random variables, the upper bound on 
$\LL_\infty(\C)$ is obtained by replacing $\max$ with $\sup$ as appropriate, but 
the lower bound requires a little more care since it is given by considering the 
uniform distribution on $\XX$.  However, the lower bound can be recovered for 
infinite $\XX$ by considering the uniform distribution on the first $k$ elements 
of $\XX$ and taking the limit as $k\rightarrow \infty$.

The upper bound on $\LL_D(X,Y)$ is immediate from \eqref{eq:minentcap} by taking 
$\C$ to be any channel such that $(X,Y)\sim \C$.

For the lower bound, suppose that $\XX=\{x_1,x_2,x_3,\ldots\}$, and define the
function $f:[0,1)\rightarrow \XX$ by $f(\xi) = x_k$ if 
\[\sum_{i=1}^{k-1} p_X(x_i) < \xi \leq \sum_{i=1}^k p_X(x_i).\]
Note that for each $x\in \XX$ we have that $p_X(x) = \mu(f^{-1}(x))$, where
$\mu$ is the Borel measure.

For each positive integer $n$, let $\ZZ_n = \{0,1,2,\ldots,2^n-1\}$ and let
$Z_n$ be a random variable taking values in $\ZZ_n$, with 
\[p_{X,Z_n} (x,z) = \mu\left(f^{-1}(x)\cap \left[\frac{z}{2^n},\frac{z+1}{2^n}
\right]\right).\]
Note that by the previous observation we have $\sum_{z\in \ZZ_n} p_{X,Z_n}(x,z)
= \mu(f^{-1}(x)) = p_X(x)$ as required.  Note also that we have that $p_Z(z) =
\sum_{x\in \XX} p_{X,Z_n}(x,z) = 2^{-n}$.

Now we have
\begin{align*}\LL_\infty(Z_n,Y) &= \log\sum_{y\in \YY} p_Y(y) \frac{\max_z
p_{Z_n|Y}(z|y)}{\max_z p_{Z_n}(z)} \\
&= \log\sum_{y\in \YY} p_Y(y) 2^n \max_z p_{Z_n|Y} (z|y).\end{align*}
We claim that 
\begin{equation}\label{eq:limzy}
\lim_{n\rightarrow \infty} p_Y(y) 2^n \max_{z} p_{Z_n|Y}(z|y)
\geq \sup_x p_{Y|X} (y|x)\end{equation}
for all $y\in \YY$.

Indeed, by Bayes' theorem we have 
\begin{align*}p_Y(y)p_{Z_n|Y}(z|y)&=p_{Z_n}(z)p_{Y|Z_n}(y|z) \\
&= 2^{-n} \sum_{x\in \XX} p_{Y|X}(y|x) p_{X|Z_n}(x|z).\end{align*}  
Let $x\in \XX$ be arbitrary.  
For sufficiently large $n$ we have that $p_{X|Z_n}(x|z) = 1$
for some $z\in \ZZ_n$, and hence for this $z$ we have that 
$p_Y(y)2^np_{Z_n|Y}(z|y) \geq p_{Y|X}(y|x)$, proving the claim.  
Summing (\ref{eq:limzy}) over all $y\in \YY$ and rearranging gives
\[\lim_{n\rightarrow \infty} \LL_\infty(Z_n,Y) \geq \log \sum_{y\in \YY} \max_x
p_{Y|X} (y|x),\]
as required.
\end{proof}

\newpage

\subsection{Interactive channels}\label{sec:intchan}

More generally, we will be interested in \emph{interactive channels}, where an
input is chosen by both Alice and Bob, and the system then produces an output to
Bob.  This means that the space $\XX$ is of the form $\XX_A\times \XX_B$, where
the spaces $\XX_A$ and $\XX_B$ are the spaces of inputs for Alice and Bob
respectively, and the interactive channel $\C$ is defined by the matrix of
conditional probabilites $p_{Y|X_A,X_B}$.  

Note that if the system involves a
sequence of outputs and actions by Alice and Bob then the `inputs' $x_A$ and
$x_B$ will in fact represent \emph{strategies} for Alice and Bob, determining
their actions on the basis of the outputs they have seen so far (in general 
these may be probabilistic, but we will see in Section \ref{sec:probstrat} that 
in fact it is sufficient to consider only deterministic strategies).  
%Moreover we 
%may assume that $\XX_A$ and $\XX_B$ consist only of deterministic strategies for 
%Alice and Bob, since by assuming that any relevant coin-tosses occur before the 
%interaction begins we may replace a probabilistic strategy with a probability 
%distribution over deterministic strategies.

We will write $((X_A,X_B),Y)\sim \C$ to mean that the random variables $X_A,X_B$
and $Y$ are consistent with the channel $\C$: that is, that $X_A$ and $X_B$ are
independent and the matrix $p_{Y|X_A,X_B}$ corresponds with the matrix defining
$\C$.

We can once again define the min-entropy leakage as the expected increase in
Bob's probability of guessing the value of the input based on having seen the
output:
\begin{align*}
&\LL_\infty ((X_A,X_B),Y) \\
&\qquad= \log \mathop{\E}_{x_B\sim X_B, y\sim Y} \frac{\sup_{x_A\in
\XX_A} p_{X_A|X_B,Y}(x_A|x_B,y)} {\sup_{x_A\in \XX_A} p_{X_A}(x_A)}\\
&\qquad= \log \E_{x_B\sim X_B} 2^{\LL_\infty(X_A,Y|X_B=x_B)}.
\end{align*}
Again the capacity of the channel is defined as the maximum leakage over all
possible priors $p_{X_A}$ and $p_{X_B}$.
\[\LL_\infty(\C) = \sup_{((X_A,X_B),Y)\sim \C} \LL_\infty((X_A,X_B),Y).\]

It appears at first glance that calculating $\LL_\infty(\C)$ may in general be
highly intractible: we have to quantify over mixed strategies (that is over
probability distributions on strategies) for Alice and Bob.  However, it turns
out that we may assume without loss of generality that Bob chooses a pure
strategy.\footnote{Note that this means a pure strategy \emph{over the set 
$\XX_B$}, which in an interactive system may contain probabilistic strategies 
(although we will see in Theorem \ref{thm:probstrat} that these may be ignored 
without loss of generality).} Indeed, this holds not only for the choices we have made but for all
reasonable such choices.  

Specifically, we chose a leakage measure, namely $\LL_\infty$, and a method of
averaging the leakage over different values of $x_B$, namely taking $\log
\E_{x_B} 2^\LL$.  The following proposition shows that we may assume a pure
strategy for Bob for any choice of leakage measure, and any method of averaging
which is `reasonable' in the sense that if the distribution of leakage is
constant with value $x$ then the value is $x$, and also that the value of a
weighted sum of leakage distributions cannot be more than the maximum value of
the distributions making up the sum (this last property is known as `quasiconvexity').

\begin{proposition}\label{prop:genpure}
Let $\LL:\mathbb{D}(\XX_A\times \YY) \rightarrow \mathbb{R}$ (the `leakage
function') be any function and let $\phi:\mathbb{D}(\mathbb{R}) \rightarrow
\mathbb{R}$ (the `averaging function') be any function such that if $X\in
\mathbb{D}(\mathbb{R})$ is constant $x$ then $\phi(X) = x$ and for any
$X_1,X_2,\ldots \in \mathbb{D}(\mathbb{R})$ and any $\rho_1,\rho_2,\ldots$ with
$\sum_i \rho_i =  1$ we have
\begin{equation}\label{eq:phinice}\phi\left(\sum_i \rho_i X_i\right) \leq \sup_i
\phi(X_i).\end{equation}
Let
\[\LL_\phi(\C) = \sup_{((X_A,X_B),Y)\sim \C} \phi(\LL(X_A,Y|X_B)).\]
Then we have
\[\LL_\phi(\C) = \sup_{x_B\in \XX_B} \sup_{(X_A,x_B,Y) \sim \C}
\LL(X_A,Y),\]
where the notation $(X_A,x_B,Y)$ means the distribution with $p_{X_B}(x_B) = 1$,
and in the above $\mathbb{D}(\XX)$ means the space of probability distributions
over the set $\XX$.
\end{proposition}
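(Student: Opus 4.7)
My plan is to prove the two inequalities separately, with the $\geq$ direction being essentially trivial and the $\leq$ direction being the substantive part where both hypotheses on $\phi$ are used.

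For the lower bound, I observe that any pure strategy $x_B$ for Bob corresponds to the special case where $X_B$ has $p_{X_B}(x_B)=1$. In this case the conditional distribution $\LL(X_A,Y|X_B)$ is the point mass at $\LL(X_A,Y)$, so by the first hypothesis on $\phi$ we get $\phi(\LL(X_A,Y|X_B)) = \LL(X_A,Y)$. Taking the supremum over $x_B$ and over $(X_A,x_B,Y)\sim \C$ then yields a lower bound on $\LL_\phi(\C)$.

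For the upper bound, fix an arbitrary triple $((X_A,X_B),Y)\sim \C$. The key observation is that the distribution $\LL(X_A,Y|X_B)$ decomposes as a mixture
\[\LL(X_A,Y|X_B) \;=\; \sum_{x_B} p_{X_B}(x_B)\,\delta_{\LL(X_A,Y|X_B=x_B)},\]
indexed over $x_B$ in the support of $X_B$, where $\delta_c$ denotes the point mass at $c\in \mathbb{R}$. Applying the quasiconvexity hypothesis \eqref{eq:phinice} and then the constancy hypothesis, I obtain
\[\phi(\LL(X_A,Y|X_B)) \;\leq\; \sup_{x_B}\phi\bigl(\delta_{\LL(X_A,Y|X_B=x_B)}\bigr) \;=\; \sup_{x_B}\LL(X_A,Y|X_B=x_B).\]
Then I need to check that for each $x_B$ in the support of $X_B$, the conditional distribution of $(X_A,Y)$ given $X_B=x_B$ is itself consistent with $\C$ in the sense of $(X_A,x_B,Y)\sim \C$. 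This follows from the independence of $X_A$ and $X_B$ (so conditioning on $X_B=x_B$ leaves the marginal of $X_A$ unchanged) together with the fact that $p_{Y|X_A,X_B}$ is prescribed by $\C$. Combining these bounds and taking the supremum over $((X_A,X_B),Y)\sim\C$ gives the upper bound.

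I do not anticipate a serious obstacle: the only delicate points are writing the mixture decomposition carefully (it is a distribution over reals, not over $\XX_A\times \YY$, so the application of quasiconvexity is to the value distribution, not to the underlying joint distributions), and verifying the consistency of the conditional $(X_A,Y\mid X_B=x_B)$ with $\C$ via the independence of Alice's and Bob's inputs. Both hypotheses on $\phi$ are used exactly once, in the single displayed chain above.
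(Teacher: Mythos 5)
Your proof is correct and follows essentially the same route as the paper: decompose the conditional leakage distribution as a mixture over the values $x_B$ in the support of $X_B$, apply the quasiconvexity hypothesis \eqref{eq:phinice} together with the constancy hypothesis to bound $\phi(\LL(X_A,Y|X_B))$ by $\sup_{x_B}\LL(X_A,Y|X_B=x_B)$, and match pure strategies $x_B$ with point-mass choices of $X_B$. You are in fact somewhat more explicit than the paper, which leaves the easy ($\geq$) direction and the consistency of the conditional $(X_A,Y\mid X_B=x_B)$ with $\C$ (via independence of $X_A$ and $X_B$) implicit.
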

\begin{proof}
Suppose that $(X_A,X_B,Y)\sim \C$.  We have
\[\phi(\LL(X_A,Y|X_B)) = \phi\left(\sum_{x_B} p_{X_B}(x_B) \LL(X_A,Y|X_B=x_B)
\right).\]
Hence for any $\epsilon > 0$, by (\ref{eq:phinice}) there exists some $x_B$ such
that
\begin{align*}\phi(\LL(X_A,Y|X_B=x_B)) &= \LL(X_A,Y|X_B=x_B) \\
&\geq \phi(\LL(X_A,Y|X_B)) - \epsilon.\end{align*}
Hence we have
\[\sup_{x_B\in\XX_B} \LL(X_A,Y|X_B=x_B) = \phi(\LL(X_A,Y|X_B)),\]
establishing the result.
\end{proof}

The min-entropy capacity is a special case of this result, with $\LL=\LL_\infty$
and $\phi(X) = \log\E_{x\sim X} 2^x$.

\begin{corollary}\label{cor:bobdet}
Let $\C$ be an interactive channel.  Then we have
\[\LL_\infty(\C) = \sup_{x_B\in\XX_B} \LL_\infty(\C|X_B=x_B).\]
\end{corollary}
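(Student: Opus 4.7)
The plan is to derive this as an immediate application of Proposition \ref{prop:genpure}, using the choices $\LL = \LL_\infty$ and $\phi(X) = \log \E_{x\sim X} 2^x$ already indicated in the remark following that proposition. With these choices, the definition of $\LL_\infty((X_A,X_B),Y)$ given earlier in the subsection is literally $\phi(\LL_\infty(X_A,Y\mid X_B))$, so taking the sup over consistent joint distributions gives exactly $\LL_\phi(\C) = \LL_\infty(\C)$. The right-hand side of the proposition is then, after absorbing the inner supremum over $(X_A,x_B,Y)\sim \C$ into the definition of $\LL_\infty(\C\mid X_B = x_B)$, precisely $\sup_{x_B\in \XX_B} \LL_\infty(\C\mid X_B=x_B)$.

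All that remains is to check that the chosen $\phi$ satisfies the two hypotheses of Proposition \ref{prop:genpure}. The constancy condition is trivial: if $X$ is almost surely equal to $x$ then $\phi(X) = \log 2^x = x$. For the quasiconvexity condition, I would unpack the mixture: for distributions $X_1,X_2,\ldots$ on $\mathbb{R}$ and weights $\rho_i \geq 0$ summing to $1$, linearity of expectation under the mixture gives
\[\phi\left(\sum_i \rho_i X_i\right) = \log \sum_i \rho_i \E_{x\sim X_i} 2^x \leq \log \sup_i \E_{x\sim X_i} 2^x = \sup_i \phi(X_i),\]
using the elementary fact that a convex combination of non-negative reals is at most their supremum together with monotonicity of $\log$.

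There is no real obstacle here; the entire content is in Proposition \ref{prop:genpure}, and the corollary is essentially just recording the specialisation. The only mildly subtle point is making sure that the outer supremum over priors $p_{X_A}$ in $\LL_\infty(\C\mid X_B=x_B)$ lines up correctly with the inner supremum over $(X_A,x_B,Y)\sim \C$ appearing in the statement of the proposition, which is purely a matter of definitions.
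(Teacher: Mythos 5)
Your proposal is correct and matches the paper's own derivation: the paper obtains Corollary \ref{cor:bobdet} as an immediate specialisation of Proposition \ref{prop:genpure} with $\LL=\LL_\infty$ and $\phi(X)=\log\E_{x\sim X}2^x$, exactly as you do. Your explicit verification that this $\phi$ satisfies the constancy and quasiconvexity hypotheses is a detail the paper leaves implicit, and it is carried out correctly.
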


\subsection{Deterministic channels}

For the channels we have considered above, once the inputs from Alice and Bob
are fixed we obtain a probability distribution on outputs.  However, for some
systems it may be that the output is not probabilistic, but is determined by the
values of the inputs; we will call such a channel deterministic.  More
concretely, an interactive channel $\C$ defined by the matrix $p_{Y|X_A,X_B}$ is
\emph{deterministic} if for all $x_A,x_B,y$ we have
\[p_{Y|X_A,X_B}(y|x_A,x_B) \in \{0,1\}.\]

If $\C$ is deterministic then the computation of $\LL_\infty(\C)$ simplifies
considerably, because it turns out that we can take a purely possibilistic view
of Alice's actions and avoid any quantification over probability distributions.

\begin{theorem}\label{thm:aliceposs}
Let $\C$ be a deterministic interactive channel.  Then
\begin{multline*}\LL_\infty(\C) = \sup_{x_B\in \XX_B} \log \left| \left\{y\in\YY \middle\vert \exists
x_A\in \XX_A:\right.\right.\\
\left.\left. p_{Y|X_A,X_B}(y|x_A,x_B)=1\right\}\right|.\end{multline*}
\end{theorem}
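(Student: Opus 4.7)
The plan is to reduce to the non-interactive deterministic setting by fixing Bob's strategy, and then apply the closed-form capacity formula \eqref{eq:minentcap} of Braun--Chatzikokolakis--Palamidessi.

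First I would invoke Corollary \ref{cor:bobdet} to write
\[\LL_\infty(\C) = \sup_{x_B\in \XX_B} \LL_\infty(\C \mid X_B = x_B),\]
so it suffices to prove that for each fixed $x_B$ the conditional capacity equals $\log|R(x_B)|$, where
\[R(x_B) = \{y\in\YY : \exists x_A\in\XX_A \text{ with } p_{Y|X_A,X_B}(y|x_A,x_B)=1\}.\]

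Next I would observe that once $x_B$ is fixed, the conditional channel from $X_A$ to $Y$ is itself a (non-interactive) channel, and by the determinism hypothesis its conditional probabilities take values in $\{0,1\}$. Hence for each $y\in\YY$,
\[\sup_{x_A\in\XX_A} p_{Y|X_A,X_B}(y|x_A,x_B) = \begin{cases} 1 & \text{if } y\in R(x_B),\\ 0 & \text{otherwise.}\end{cases}\]
Applying the formula \eqref{eq:minentcap} to this conditional channel gives
\[\LL_\infty(\C\mid X_B = x_B) = \log\sum_{y\in\YY}\sup_{x_A} p_{Y|X_A,X_B}(y|x_A,x_B) = \log|R(x_B)|,\]
and taking the supremum over $x_B$ yields the claimed identity.

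I do not expect a genuine obstacle here: the two non-trivial inputs, Corollary \ref{cor:bobdet} and equation \eqref{eq:minentcap}, are both already in hand, and the determinism hypothesis collapses the supremum inside the sum to an indicator. The only point requiring a little care is that \eqref{eq:minentcap} was originally stated for channels with finite $\XX_A$; as already discussed in the proof of Theorem \ref{thm:delan}, the upper bound extends to arbitrary discrete $\XX_A$, and the lower bound is recovered by taking uniform distributions on larger and larger finite subsets of the preimage of $R(x_B)$ (choosing one $x_A$ per element of $R(x_B)$), which suffices to drive $\LL_\infty(\C \mid X_B = x_B)$ arbitrarily close to $\log|R(x_B)|$ even when $R(x_B)$ is countably infinite.
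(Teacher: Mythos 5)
Your proposal is correct and follows essentially the same route as the paper: reduce to a fixed $x_B$ via Corollary \ref{cor:bobdet}, apply the capacity formula \eqref{eq:minentcap} to the resulting conditional channel, and use determinism to turn the inner supremum into an indicator so the sum counts $|R(x_B)|$. Your extra remark on handling countably infinite $\XX_A$ matches the paper's own treatment of \eqref{eq:minentcap} in the proof of Theorem \ref{thm:delan}.
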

\begin{proof}
By Corollary \ref{cor:bobdet}, it suffices to prove that 
\begin{multline*}\LL_\infty(\C|X_B=x_B) = \log \left| \left\{y\in\YY \middle\vert \exists
x_A\in \XX_A:\right.\right.\\
\left.\left. p_{Y|X_A,X_B}(y|x_A,x_B)=1\right\}\right|.\end{multline*}
By the formula for $\LL_\infty(\C)$ from
\cite{braun2009quantitative} (recalled as (\ref{eq:minentcap}) in the proof of
Theorem \ref{thm:delan}) we have
\begin{multline*}\LL_\infty(\C|X_B=x_B) = \log \sum_{y\in \YY} \max_{x_A\in \XX_A}
p_{Y|X_A,X_B} (y|x_A,x_B) \\
= \log \left| \left\{y\in\YY \middle\vert \exists
x_A\in \XX_A: p_{Y|X_A,X_B}(y|x_A,x_B)=1\right\}\right|,
\end{multline*}
since $\C$ is deterministic and so $p_{Y|X_A,X_B}(y|x_A,x_B) \in \{0,1\}$.
\end{proof}

Theorem \ref{thm:aliceposs} essentially says that it suffices to count the
maximum number of outputs that can be seen by Bob, consistently with his choice
of strategy.  The corresponding result for non-interactive channels is Theorem 1
of \cite{smith2009foundations}.

\subsection{Probabilistic vs deterministic strategies}\label{sec:probstrat}

We observed in Section \ref{sec:intchan} that the `channel' paradigm is able to
model systems involving many rounds of interaction, because we can take Alice
and Bob's inputs to be strategies, determining the actions they will take at
each step of the interaction. At each step, Alice (respectively Bob) will have
observed a trace of the interaction thus far drawn from a set $T$, and must
select an action drawn from a set $\Sigma$. To specify a randomised strategy for
Alice or Bob, we must therefore specify for each $t\in T$ a probability
distribution over $\Sigma$, so the set of strategies is the set of maps
$T\rightarrow \dist \Sigma$.

In this section we will show that in fact it suffices to consider only 
deterministic strategies for Alice and Bob.  The intuition behind this is fairly 
straightforward: given a probabilistic strategy, we could imagine that any 
necessary coins are tossed before the execution begins, which gives a probability 
distribution over deterministic strategies.  This changes nothing except that it 
allows Bob to see how the random choices made by his strategy were resolved, but
this only gives him more information and so does not affect the information flow
capacity.
To avoid technical measurability issues we 
will assume that the sets $T$ and $\Sigma$ are finite.

\begin{definition}
Let $T$ be a finite set of \emph{traces} and $\Sigma$ a finite set of 
\emph{actions}.  A \emph{strategy} over $T$ and $\Sigma$ is a function $f:T\rightarrow 
\dist(\Sigma)$.  The set of strategies over $T$ and $\Sigma$ is denoted 
$\Ss_{T,\Sigma}$.

A strategy $f\in \Ss_{T,\Sigma}$ is \emph{determinsitic} if we have 
\[f(t)(x) \in \{0,1\}\]
for all $t\in T$ and $x\in \Sigma$.  We write $\DD_{T,\Sigma} \subset 
\Ss_{T,\Sigma}$ for the set of deterministic strategies over $T$ and $\Sigma$.
\end{definition}

In the execution itself, these strategies will be executed and particular 
actions chosen.  The output $y\in \YY$ displayed to Bob is then a function 
(which may be probabilistic) of the choices that were made; the system is defined by this 
function, which is a map from pairs of functions $T \rightarrow \Sigma$ (the 
choices made by Alice and Bob respectively) to distributions over $\YY$.  
Note that it may be that in some 
executions not all traces are actually presented to Alice and Bob for decision; 
this can be represented by the choices made in response to those traces being 
ignored, so no generality is lost by considering total functions $T\rightarrow 
\Sigma$ (similarly the trace-sets relevant to Alice and Bob may be distinct, but 
this can be represented by ignoring the choices made by Alice on Bob's traces 
and vice versa).

We write $\Sigma^T$ for the set of functions $T\rightarrow \Sigma$; the 
probability that a particular function is realised by a particular strategy can 
be computed by multiplying the probabilities for each decision (note that 
nothing is lost by assuming independence: if Alice and Bob are supposed to know 
about previous choices they have made then this can be encoded in the traces).

\begin{definition}\label{def:realise}
Let $f\in \Ss_{T,\Sigma}$ be any strategy and $g\in \Sigma^T$.  The probability 
that $f$ \emph{realises} $g$, written $f(g)$, is given by
\[f(g) = \prod_{t\in T} f(t)(g(t)).\]
\end{definition}

\begin{definition}\label{def:output}
Let $\phi:\Sigma^T\times \Sigma^T \rightarrow \dist \YY$ be any map, and let 
$\XX_A$ and $\XX_B$ be any subsets of $\Ss_{T,\Sigma}$.  The \emph{interactive 
channel determined by $\phi, \XX_A$ and $\XX_B$}, denoted $\C_{\phi,\XX_A,\XX_B}$,
is determined by the matrix of conditional probabilities
\[p_{Y|X_A,X_B}(y|f_A,f_B) = \sum_{g_A,g_B\in \Sigma^T} 
f_A(g_A)f_B(g_B) \phi(g_A,g_B)(y).\]
\end{definition}

We observe that if the function $\phi$ defining the system is deterministic, and 
if Alice and Bob use only deterministic strategies, then the channel produced is 
a deterministic interactive channel in the sense of the previous section, such 
that Theorem \ref{thm:aliceposs} applies to it.

\begin{proposition}
Suppose that $\phi(g,g')(y)\in \{0,1\}$ for every $g,g'\in \Sigma^T$ and $y\in 
\YY$.  Then $\C_{\phi,\DD_{T,\Sigma},\DD_{T,\Sigma}}$ is a deterministic 
interactive channel.
\end{proposition}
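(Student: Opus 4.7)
The plan is to unpack Definition \ref{def:output} and show that when both input strategies are deterministic, the sum collapses to a single term whose value is already assumed to be in $\{0,1\}$.

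First I would observe that a deterministic strategy $f \in \DD_{T,\Sigma}$ picks out, for each trace $t$, a unique action $\hat{f}(t) \in \Sigma$ with $f(t)(\hat{f}(t)) = 1$ and $f(t)(x) = 0$ for $x \neq \hat{f}(t)$. Using Definition \ref{def:realise}, the realisation probability $f(g) = \prod_{t\in T} f(t)(g(t))$ is therefore $1$ if $g = \hat{f}$ and $0$ otherwise (at least one factor vanishes once $g$ disagrees with $\hat{f}$ on some trace, and $T$ is finite so the product is well-defined).

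Next I would substitute this into the formula from Definition \ref{def:output}. For $f_A, f_B \in \DD_{T,\Sigma}$, every term in the sum
\[p_{Y|X_A,X_B}(y|f_A,f_B) = \sum_{g_A,g_B\in \Sigma^T} f_A(g_A) f_B(g_B)\, \phi(g_A,g_B)(y)\]
vanishes except the single term $(g_A,g_B) = (\hat{f_A}, \hat{f_B})$, yielding
\[p_{Y|X_A,X_B}(y|f_A,f_B) = \phi(\hat{f_A}, \hat{f_B})(y).\]
By the hypothesis on $\phi$, this value lies in $\{0,1\}$, which is exactly the definition of a deterministic interactive channel.

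There is no real obstacle here; the only thing to be slightly careful about is confirming that the indicator-style collapse of $f(g)$ is valid for finite $T$, which follows directly from the product formula in Definition \ref{def:realise}. Everything else is a substitution.
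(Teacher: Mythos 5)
Your proof is correct and follows essentially the same route as the paper: deterministic strategies have $\{0,1\}$-valued realisation probabilities, so the channel matrix entry inherits the $\{0,1\}$-valuedness of $\phi$. In fact you are slightly more explicit than the paper, which does not spell out that the sum collapses to the single term $\phi(\widetilde{f_A},\widetilde{f_B})(y)$ — a detail worth having, since a sum of $\{0,1\}$-valued terms is not in general $\{0,1\}$-valued without that collapse.
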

\begin{proof}
If $f_A,f_B\in \DD_{T,\Sigma}$ then $f_A(g),f_B(g) \in \{0,1\}$ for all $g\in 
\Sigma^T$.  Hence if $\phi(g_A,g_B,y) \in \{0,1\}$ for all $g_A,g_B,y$ then 
we have $p_{Y|X_A,X_B}(y|f_A,f_B) \in \{0,1\}$ for all $f_A,f_B,y$, as required.
\end{proof}

The main theorem of this section is that in fact it suffices to conisder only 
deterministic strategies for Alice and Bob.

\begin{theorem}\label{thm:probstrat}
Let $\Sigma$ and $T$ be any finite sets, $\YY$ any set and $\phi:\Sigma^T \times
\Sigma^T \rightarrow \dist \YY$ be any map.  Then we have
\[\LL_\infty\left( \C_{\phi,\Ss_{T,\Sigma},\Ss_{T,\Sigma}} \right) 
= \LL_\infty\left( \C_{\phi,\DD_{T,\Sigma},\DD_{T,\Sigma}} \right).\]
\end{theorem}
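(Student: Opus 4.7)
The plan is to prove the nontrivial direction $\LL_\infty(\C_{\phi,\Ss_{T,\Sigma},\Ss_{T,\Sigma}}) \leq \LL_\infty(\C_{\phi,\DD_{T,\Sigma},\DD_{T,\Sigma}})$; the reverse inequality is immediate from $\DD_{T,\Sigma} \subseteq \Ss_{T,\Sigma}$. By Corollary~\ref{cor:bobdet} applied to the left-hand channel, it suffices to fix an arbitrary probabilistic strategy $f_B \in \Ss_{T,\Sigma}$ for Bob and to bound $\LL_\infty(\C_{\phi,\Ss_{T,\Sigma},\Ss_{T,\Sigma}} \mid X_B = f_B)$ by $\LL_\infty(\C_{\phi,\DD_{T,\Sigma},\DD_{T,\Sigma}})$. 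The Braun--Chatzikokolakis--Palamidessi formula~(\ref{eq:minentcap}) then reduces the task to estimating $\sum_y \sup_{f_A \in \Ss_{T,\Sigma}} p_{Y|X_A,X_B}(y|f_A,f_B)$.

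The first step removes Alice's randomisation \emph{exactly}. Expanding with Definition~\ref{def:output}, the conditional probability is
\[p_{Y|X_A,X_B}(y|f_A,f_B) = \sum_{g_A \in \Sigma^T} f_A(g_A) \sum_{g_B \in \Sigma^T} f_B(g_B)\,\phi(g_A,g_B)(y),\]
which, viewed as a function of $f_A$, is a convex combination of the nonnegative quantities $h(g_A) = \sum_{g_B} f_B(g_B)\,\phi(g_A,g_B)(y)$ with weights $f_A(g_A)$. A weighted average is at most its largest summand, with equality attained at a Dirac mass, so the supremum over $f_A \in \Ss_{T,\Sigma}$ equals $\sup_{g_A \in \Sigma^T} h(g_A)$.

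The second step peels off Bob's randomisation, but only via an inequality in the opposite direction. For each fixed $y$ one has $\sup_{g_A} \sum_{g_B} f_B(g_B)\,\phi(g_A,g_B)(y) \leq \sum_{g_B} f_B(g_B)\sup_{g_A} \phi(g_A,g_B)(y)$, since no single $g_A$ can match the pointwise optimum at every $g_B$. Summing over $y$ and swapping the order of summation exhibits the bound as a convex combination (with weights $f_B(g_B)$) of $\sum_y \sup_{g_A \in \Sigma^T} \phi(g_A,g_B)(y)$, which by~(\ref{eq:minentcap}) is exactly $2^{\LL_\infty(\C_{\phi,\DD_{T,\Sigma},\DD_{T,\Sigma}} \mid X_B = g_B)}$. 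Any convex combination is dominated by its supremum over $g_B$, and by Corollary~\ref{cor:bobdet} that supremum equals $2^{\LL_\infty(\C_{\phi,\DD_{T,\Sigma},\DD_{T,\Sigma}})}$; taking logarithms and then sup over $f_B$ completes the argument.

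I do not expect any real obstacle: the argument is essentially bookkeeping built on the observation that the conditional output probability is affine in each party's strategy. The interesting feature is the asymmetry between the two simplification steps---Alice's mixing disappears with equality, whereas Bob's disappears only with an inequality---and this asymmetry is precisely the formal counterpart of the intuition flagged in the paragraph preceding the statement, namely that revealing Bob's coin tosses to the adversary can only increase leakage, so we lose nothing by forcing Bob's strategy to be deterministic.
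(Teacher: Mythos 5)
Your argument is correct, and the asymmetry you highlight (Alice's mixing removed with equality, Bob's only with an inequality) is exactly the right mechanism; but your route through it is organised differently from the paper's. The paper, after invoking Corollary~\ref{cor:bobdet} to fix Bob's strategy $f_B\in\Ss_{T,\Sigma}$, keeps Alice's prior $X_A$ arbitrary and replaces $f_B$ by the random variable $X_B'$ on $\DD_{T,\Sigma}$ with $p_{X_B'}(f)=f_B(\widetilde f)$ (``toss Bob's coins in advance''); the two key observations are that $(X_A,X_B)$ and $(X_A,X_B')$ induce the same output distribution $Y$, and that an expectation over $f_B'$ of suprema of posteriors dominates the supremum of the corresponding mixture, so the \emph{leakage itself}, prior by prior, cannot decrease. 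Only then does it derandomise Alice, via \eqref{eq:minentcap} and affinity of the channel matrix in $f_A$. You instead work entirely at the level of capacities and channel-matrix entries: condition on $f_B$, eliminate Alice first (exactly, since $p_{Y|X_A,X_B}(y|f_A,f_B)$ is a convex combination over $g_A$ with weights $f_A(g_A)$), then eliminate Bob by $\sup_{g_A}\sum_{g_B}\le\sum_{g_B}\sup_{g_A}$, identify $\sum_y\sup_{g_A}\phi(g_A,g_B)(y)$ with $2^{\LL_\infty(\C_{\phi,\DD,\DD}\mid X_B=g_B)}$ via \eqref{eq:minentcap} (legitimate, as $\DD_{T,\Sigma}$ is finite), and close with a second application of Corollary~\ref{cor:bobdet} to the deterministic channel. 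Your version is more compact and avoids the ``same output distribution'' comparison of joint distributions; the paper's version proves slightly more along the way, namely that derandomising Bob does not decrease leakage for every fixed prior on Alice, which formalises the coin-tossing intuition directly. Two small points to flag, both of which the paper also has to confront: Corollary~\ref{cor:bobdet} is being applied with $\XX_B=\Ss_{T,\Sigma}$, an uncountable set, which the paper handles in a footnote by replacing sums over $\XX_B$ with integrals; and when you invoke \eqref{eq:minentcap} for the channel conditioned on $f_B$, whose input space $\Ss_{T,\Sigma}$ is likewise uncountable, you only need (and should only claim) the upper-bound direction of that formula, which holds for arbitrary discrete priors on $\XX_A$.
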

\begin{proof}
The lower bound is immediate: since whenever $((X_A,X_B),Y)\sim 
\C_{\phi,\DD_{T,\Sigma},\DD_{T,\Sigma}}$ then also $((X_A,X_B),Y) \sim 
\C_{\phi,\Ss_{T,\Sigma},\Ss_{T,\Sigma}}$, we must have (writing $\C_{\Ss}$ and 
$\C_{\DD}$ resectively for the two channels in the statement of the theorem)
%$\C_{\phi,\Ss_{T,\Sigma},\Ss_{T,\Sigma}}$ and 
%$\C_{\phi,\DD_{T,\Sigma},\DD_{T,\Sigma}}$ respectively)
\begin{align*}
\LL_\infty(\C_{\Ss}) &= 
\sup_{((X_A,X_B),Y)\sim \C_\Ss} 
\LL_\infty((X_A,X_B),Y) \\
&\geq \sup_{((X_A,X_B),Y)\sim \C_\DD} 
\LL_\infty((X_A,X_B),Y) \\
&= \LL_{\infty}(\C_\DD).
\end{align*}

For the upper bound, let $X_A$ and $X_B$ be any independent $\Ss_{T,\Sigma}$-valued random 
variables.  We will first show that without loss of generality we may assume 
that $X_B$ is supported only on $\DD_{T,\Sigma}$.  By Corollary \ref{cor:bobdet} 
it suffices to show this where $X_B$ is a point distribution,\footnote{Strictly 
speaking Corollary \ref{cor:bobdet} was proved for discrete distributions, 
whereas $\Ss_{T,\Sigma}$ is a continuous subset of $\mathbb{R}^{|T|\cdot 
|\Sigma|}$.  The proof for this case is exactly the same, with sums over $\XX_B$ 
replaced by integrals with respect to the Lebesgue measure.} so say that $X_B$ 
takes the value $f_B\in \Ss_{T,\Sigma}$.

Define the random variable $X_B'$ to be supported only on $\DD_{T,\Sigma}$, and 
for $f \in \DD_{T,\Sigma}$ let
\[p_{X_B'}(f) = f_B(\widetilde{f}),\]
where $\widetilde{f}$ is the function $T\rightarrow \Sigma$ induced by $f$: that 
is, $\widetilde{f}(t)$ is the unique element $x$ of $\Sigma$ such that 
$f(t)(x)=1$.

Note that by Definitions \ref{def:realise} and \ref{def:output} we have that 
$(X_A,X_B)$ and $(X_A,X_B')$ induce the same output distribution $Y$, and so it 
suffices to prove that for each $y\in \YY$ we have
\begin{multline*}\E_{f'_B\sim X_B'} \sup_{f_A\in \Ss_{T,\Sigma}} p_{X_A|X_B',Y} (f_A|f_B',y) 
\geq \\ \sup_{f_A\in \Ss_{T,\Sigma}} p_{X_A|X_B,Y} (f_A|f_B,y).\end{multline*}
Now on the one hand we have
\begin{multline}\label{eq:xbbig}
\E_{f'_B\sim X_B'} \sup_{f_A\in \Ss_{T,\Sigma}} p_{X_A|X_B',Y} (f_A|f_B',y) = \\
\sum_{f_B'\in \DD_{T,\Sigma}} f_B(\widetilde{f_B'}) \sup_{f_A \in \Ss_{T,\Sigma}} 
p_{X_A|X_B',Y} (f_A|f_B',y).\end{multline}
On the other hand we have
\begin{multline}\label{eq:xbsmall}
\sup_{f_A\in \Ss_{T,\Sigma}} p_{X_A|X_B,Y}(f_A|f_B,y) = \\
\sup_{f_A\in \Ss_{T,\Sigma}} \sum_{f_B'\in \DD_{T,\Sigma}} f_B(\widetilde{f_B'}) 
p_{X_A|X'_B,Y} (f_A|f_B',y).\end{multline}
Plainly $\eqref{eq:xbbig} \geq \eqref{eq:xbsmall}$, establishing the result.

We now show that we may also assume that $X_A$ is supported only on 
$\DD_{T,\Sigma}$, and again by Corollary \ref{cor:bobdet} it suffices to show 
this where $X_B$ takes only a single value, say $f_B\in \DD_{T,\Sigma}$.  By 
the min-entropy capacity formula \eqref{eq:minentcap} conditioned on $X_B=f_B$ 
it suffices to show that for every $y\in \YY$ we have
\[\sup_{f_A\in \Ss_{T,\Sigma}} p_{Y|X_A,X_B}(y|f_A,f_B) \leq 
\max_{f_A\in \DD_{T,\Sigma}} p_{Y|X_A,X_B}(y|f_A,f_B).\]
But this is straightforward: indeed, for any $f_A\in \Ss_{T,\Sigma}$ we have
\begin{align*}
p_{Y|X_A,X_B}(y|f_A,f_B) &= \sum_{f_A'\in \DD_{T,\Sigma}} f_A(\widetilde{f_A'}) 
\phi(\widetilde{f_A'},\widetilde{f_B},y) \\
&\leq \max_{f_A'\in \DD_{T,\Sigma}} \phi(\widetilde{f_A'},\widetilde{f_B},y) \\
&= \max_{f_A'\in \DD_{T,\Sigma}} p_{Y|X_A,X_B}(y|f_A',f_B,y),\end{align*}
as required.
\end{proof}

\section{Deterministic interactive systems}\label{sec:detint}

\subsection{Finite-state transducers}

We will model deterministic interactive systems as deterministic finite-state
transducers.  Whereas Goguen and Meseguer in \cite{goguen1982security} modelled 
such systems as `state-observed' transducers, we will consider the more general notion of `action-observed' transducers (see the work of van der Meyden and Zhang in 
\cite{vandermeyden2007comparison} for further discussion of the relationship 
between noninterference properties in these two models; this model is also 
essentially equivalent to the notion of `Input-Output Labelled Transition System'
used by Clark and Hunt in the non-quantitative setting in 
\cite{clark2009noninterference}).

\begin{definition}
A \emph{deterministic finite-state transducer} (DFST) is a 7-tuple $\TT =
(Q,q_0,F,\Sigma,\Gamma,\delta,\sigma)$, where $Q$ is a finite set of
\emph{states}, $q_0\in Q$ is the \emph{initial state}, $F\subseteq Q$ is the set
of \emph{accepting states}, $\delta:Q\times \Sigma\rightarrow Q$ is the
\emph{transition function} and $\sigma:Q\times \Sigma\rightarrow \Gamma\cup
\{\epsilon\}$ is the \emph{output function}.
\end{definition}

A pair $(a_1a_2\ldots a_k, b_1b_2\ldots b_l)\in \Sigma^*\times \Gamma^*$ is
accepted by $\TT$ if there exists a sequence of states $q_1\ldots q_k\in Q^*$
such that $q_k\in F$, for every $0\leq i < k$ we have $q_{i+1} = \delta(q_i,
a_{i+1})$ and $b_1\ldots b_l = \sigma(q_0,a_{1}) \sigma(q_1,a_2) \ldots
\sigma(q_{k-1},a_k)$.  We will write $L(\TT)$ for the subset of
$\Sigma^*\times \Gamma^*$ accepted by $\TT$; such a set is a deterministic
finite-state \emph{transduction}, which we will also abbreviate by DFST.

This definition is not quite convenient for our purposes, because we
assume that the agents are able to observe the passage of time.  Hence even at a
timestep where the machine does nothing, there should be a record in the trace
of the fact that time has passed.  We ensure this by requiring that there
should be an output at each step, and apply the non-standard term `synchronised'
to describe this property (such a transducer is also sometimes called
`letter-to-letter').

\begin{definition}A DFST $\TT = (Q, q_0, F, \Sigma, \Gamma, \delta, \sigma)$ is
\emph{synchronised} if $\sigma(Q,\Sigma)\subseteq \Gamma$ (that is, we do not
have $\sigma(q,a)=\epsilon$ for any $q\in Q$ and $a\in \Sigma$).  In this case
we say that $\TT$ is a synchronised deterministic finite-state transducer
(SDFST).
\end{definition}

Note that this definition almost corresponds with the original definition of a
\emph{Mealy machine} (\cite{mealy1955method}), except that we allow for a set of
final states $F\neq Q$.  It is clear that if $\TT$ is synchronised then
$(a_1\ldots a_k,b_1\ldots b_l)\in \Sigma^*\times \Gamma^*$ is accepted by $\TT$
only if $l=k$.  We shall therefore apply the `zip' operation and view $\TT$ as
accepting elements of $(\Sigma\times \Gamma)^*$.

We are interested in SDFSTs of a special kind, representing the fact that the
system communicates separately with Alice and Bob.  We will consider SDFSTs
whose input and output alphabets $\Sigma$ and $\Gamma$ are of the form
$\Sigma_A\times \Sigma_B$ and $\Gamma_A\times \Gamma_B$ respectively.  The pairs
$(\Sigma_A,\Gamma_A)$ and $(\Sigma_B,\Gamma_B)$ represent the input and output
alphabets used for communication with Alice and Bob respectively.

A simple example of such a transducer is the system which simply relays messages
between the two agents (with $\Sigma_A=\Sigma_B=\{a,b\}$ and $\Gamma_A=\Gamma_B 
=\{a',b'\}$).
This is shown in Figure \ref{fig:relay}.

\begin{figure}[htbp]
\centering
\begin{tikzpicture}[shorten >=1pt,node distance=2cm,on grid,auto]
   \node[state,initial, accepting] (q_0)   {$q_0$};
   \path[->]
    (q_0) edge [loop above] node {$(x,y)|(y',x') \forall x,y\in \{a,b\}$} ();
\end{tikzpicture}
\caption{A relay system.}\label{fig:relay}
\end{figure}
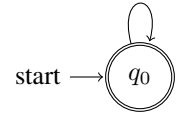

\subsection{Strategies and information flow}

In order to apply the framework of the previous section, we must define the
spaces $\XX_A, \XX_B$ of strategies for Alice and Bob, the space $Y$ of outcomes
visible to Bob, and the matrix $p_{Y|X_A,X_B}$ governing which
outcomes occur.  Since we are considering deterministic specifications, the
matrix $p_{Y|X_A,X_B}$ will be 0-1-valued.

Alice and Bob must each decide on an action based on the trace they have seen
thus far, so a strategy for Alice is a function 
\[x_A:(\Sigma_A\times\Gamma_A)^* \rightarrow \Sigma_A,\]
and similarly a strategy for Bob is a function $x_B:(\Sigma_B\times \Gamma_B)^*
\rightarrow \Sigma_B$.  

Recall that by Theorem \ref{thm:probstrat} it suffices to
consider deterministic strategies for Alice and Bob:  
in the language of Section 
\ref{sec:probstrat}, we have $T=(\Sigma_A\times\Gamma_A)^*\cup (\Sigma_B\times 
\Gamma_B)^*$ and $\Sigma = \Sigma_A\cup \Sigma_B$. We will have that the 
function $\phi(g_A,g_B,y)$ ignores the values of $g_A$ on $(\Sigma_B\times 
\Gamma_B)^*$ and the values of $g_B$ on $(\Sigma_A\times \Gamma_A)^*$, and treats 
all elements of $\Sigma_B$ in the image of $g_A$ as equivalent to some fixed 
$a\in \Sigma_A$ and similarly all elements of $\Sigma_A$ in the image of $g_B$ 
as equivalent to some fixed $b\in \Sigma_B$.  By Theorem \ref{thm:probstrat} it 
suffices to consider deterministic strategies for Alice and Bob and so it is 
more convenient to refer to the sets of deterministic strategies directly as 
$\XX_A$ and $\XX_B$, and to $\phi(x_A,x_B)(y)$ directly as the channel matrix 
$p_{Y|X_A,X_B}(y|x_A,x_B)$.

Given an SDFST $\TT$, and strategies $x_A$ and $x_B$ for Alice and Bob
respectively, what output or outputs can be shown to Bob?  We consider first
the case where $F=Q$, postponing for later the issues that arise when
$F\subsetneq Q$.

\begin{definition}\label{def:cons}We will say that a word
\begin{multline*}w=((a_1,a_1'),(b_1,b_1'))\ldots ((a_k,a_k'),(b_k,b_k')) \\ \in (\Sigma\times
\Gamma)^* = \left((\Sigma_A\times \Sigma_B)\times (\Gamma_A \times
\Gamma_B)\right)^*\end{multline*}
(so $a_i\in \Sigma_A, a_i'\in \Sigma_B, b_i\in \Gamma_A$
and $b_i'\in \Gamma_B$) is \emph{consistent} with SDFST $\TT$ and strategies
$x_A, x_B$ if 
\begin{enumerate}[(i)]\item $w\in L(\TT)$, and
\item\label{it:strats} for every $1\leq i \leq k$ we have \[a_i =
x_A((a_1,b_1),\ldots, (a_{i-1}, b_{i-1})),\] and \[a_i' = x_B((a_1',b_1'),\ldots,
(a'_{i-1}, b'_{i-1})).\]
\end{enumerate}

A word $(a'_1,b'_1)\ldots (a'_k,b'_k)\in (\Sigma_B\times \Gamma_B)^*$ is
consistent with $\TT, x_A$ and $x_B$ if there exist $a_1,\ldots,a_k \in
\Sigma_A$ and $b_1,\ldots,b_k\in \Gamma_A$ such that
$(((a_1,a_1'),(b_1,b_1'))\ldots \allowbreak ((a_k,a_k'),(b_k,b_k')))$ is consistent with
$\TT, x_A$ and $x_B$.
\end{definition}

We will sometimes refer to limb (\ref{it:strats}) of the above Definition as
`being consistent with $x_A, x_B$'; then being consistent with $\TT, x_A, x_B$
means being an element of $L(\TT)$ and being consistent with $x_A,x_B$.

Could we choose to have $Y=(\Sigma_B\times \Gamma_B)^*$, and say that
$p_{Y|X_A,X_B}(y|x_A,x_B)=1$ if $y$ is consistent with $\TT, x_A$ and $x_B$?

No, because such a $y$ may not be unique, and so the matrix
$p_{Y|X_A,X_B}(y|x_A,x_B)$ would not in general be stochastic.  For example, if
$\TT$ is the identity transduction and $x_A$ and $x_B$ are both
constant $a$, we have that $(a,a)^k$ is consistent with $\TT, x_A$ and $x_B$ for
all $k$.  But prefixes are the only way this can happen.

\begin{proposition}\label{prop:uniqword}Let $\TT$ be an SDFST as above and let
$x_A,x_B$ be strategies for Alice and Bob.  Then there exists some $w_0\in
(\Sigma\times \Gamma)^{\omega}$ such that for any $w\in L(\TT)$ we have that
$w$ is consistent with $x_A$ and $x_B$ if and only if $w\leq w_0$.
\end{proposition}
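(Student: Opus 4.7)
The plan is to construct $w_0$ explicitly by induction on position, exploiting the fact that both the strategies $x_A, x_B$ and the transducer $\TT$ are deterministic, so at each step every choice is forced. I would then verify that the prefixes of this canonical $w_0$ are exactly the elements of $L(\TT)$ that are consistent with $x_A, x_B$.

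More concretely, I would define a sequence of states $q_0, q_1, q_2, \dots$ and letters $\ell_1, \ell_2, \dots \in \Sigma \times \Gamma$ simultaneously. Starting from the initial state $q_0$, assume inductively that $q_{i-1}$ has been defined together with the prefix $w_0^{(i-1)} = \ell_1 \ell_2 \cdots \ell_{i-1}$ giving rise to it. Then set
\[a_i = x_A((a_1,b_1),\dots,(a_{i-1},b_{i-1})), \quad a_i' = x_B((a_1',b_1'),\dots,(a_{i-1}',b_{i-1}')),\]
let $(b_i, b_i') = \sigma(q_{i-1},(a_i,a_i'))$ and $q_i = \delta(q_{i-1},(a_i,a_i'))$, and put $\ell_i = ((a_i,a_i'),(b_i,b_i'))$. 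Let $w_0$ be the infinite word whose $i$th letter is $\ell_i$.

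For the equivalence, the ``if'' direction is immediate by construction: every finite prefix of $w_0$ lies in $L(\TT)$ (here we use $F=Q$, as in the setting of the proposition) and satisfies the strategy-consistency clause (ii) of Definition \ref{def:cons} by the defining equations for $a_i, a_i'$. For the ``only if'' direction, suppose $w = \ell_1' \ell_2' \cdots \ell_k' \in L(\TT)$ is consistent with $x_A, x_B$; I would prove $\ell_j' = \ell_j$ for $1 \leq j \leq k$ by induction on $j$. The inductive hypothesis gives that the state reached after reading $\ell_1'\cdots \ell_{j-1}'$ equals $q_{j-1}$ (by determinism of $\delta$). Then consistency with $x_A, x_B$ forces the $\Sigma$-component of $\ell_j'$ to agree with $(a_j, a_j')$, and determinism of $\sigma$ forces the $\Gamma$-component to agree with $(b_j, b_j')$.

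There is no real obstacle here; the only subtlety is being clean about the fact that consistency of $w$ with $x_A$ only constrains Alice's actions in response to her own observed $(\Sigma_A \times \Gamma_A)$-trace, and similarly for Bob, so the inductive matching has to be written out componentwise as above. The result relies essentially on determinism of all three ingredients ($\TT$, $x_A$, $x_B$) and on $F = Q$ so that every prefix is accepted; the case $F \subsetneq Q$ is explicitly deferred in the surrounding text.
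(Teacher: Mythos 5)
Your proof is correct and essentially the same as the paper's: you build $w_0$ by exactly the same explicit recursion (strategies determine the input letter, $\sigma$ and $\delta$ determine the output and next state), and your ``only if'' direction by induction on positions is just the direct form of the paper's contrapositive first-difference argument. One small remark: the appeal to $F=Q$ is unnecessary here, since the proposition only quantifies over $w\in L(\TT)$ you never need prefixes of $w_0$ to be accepted --- the paper's proof uses no acceptance claim, and that prefixes of $w_0$ lie in $L(\TT)$ when $F=Q$ is invoked only later, in Proposition~\ref{prop:stochmat}.
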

\begin{proof}
Define the infinite word \[w_0 =
((a_1,a_1'),(b_1,b'_1))((a_2,a_2'),(b_2,b_2'))\ldots \in
(\Sigma\times\Gamma)^{\omega}\] by 
\begin{align*}a_i &= x_A((a_1,b_1)\ldots
(a_{i-1},b_{i-1})), \\
a_i' &= x_B((a'_1,b'_1)\ldots(a'_{i-1},b'_{i-1})), \text{ and} \\
(b_i,b'_i) &= \sigma(q_{i-1},(a_i,a'_i)),\end{align*}
where $q_0$ is the initial state and
the sequence $q_0q_1\ldots$ is defined by $q_i = \delta(q_{i-1},(a_i,a'_i))$ for
$i\geq 1$.

Clearly if $w \leq w_0$ then $w$ satisfies limb (\ref{it:strats}) of Definition
\ref{def:cons}, and so if also $w\in L(\TT)$ then $w$ is consistent with $\TT,
x_A$ and $x_B$.

Conversely suppose that $w\not\leq w_0$.  Then we have that $w = w'(a,b)w''$ for some
$w'=((a_1,a'_1),(b_1,b'_1)) \ldots \allowbreak ((a_k,a'_k),(b_k,b'_k)) \leq w_0$, some $w''\in
\Sigma\times\Gamma^*$ and some $(a,b)\in \Sigma\times \Gamma$ with $(a,b)\neq
((a_{k+1},a'_{k+1}),(b_{k+1},b'_{k+1}))$.  But if $a\neq (a_{k+1},a'_{k+1})$
then without loss of generality we have $\fst(a)\neq a_{k+1} =
x_A((a_1,b_1)\ldots(a_k,b_k))$ and so $w$ is not consistent with $x_A,x_B$.

On the other hand if $b\neq (b_{k+1},b'_{k+1}) = \sigma(q_k, (a_k,a'_k))$ then
$w \notin L(\TT)$.  Either way we have that $w$ is not consistent with $\TT,
x_A$ and $x_B$.
\end{proof}

The intuition here is that having fixed $x_A$ and $x_B$, these uniquely 
determine the actions of Alice and Bob at each step given the outputs they are 
shown, and $\TT$ determines those outputs uniquely based on the actions up to 
the current time.

Projecting $w_0$ onto $(\Sigma_B\times \Gamma_B)^{\omega}$ gives

\begin{corollary}\label{cor:uniqpref}
Let $\TT, x_A$ and $x_B$ be as above.  There exists some
$w_0\in (\Sigma_B\times \Gamma_B)^{\omega}$ such that if $w\in (\Sigma_B\times
\Gamma_B)^*$ is consistent with $\TT, x_A$ and $x_B$ then $w \leq w_0$.
\end{corollary}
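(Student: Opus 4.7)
The plan is to derive this corollary essentially for free by projecting the infinite word given by Proposition \ref{prop:uniqword} onto its B-components. Since consistency of a word in $(\Sigma_B \times \Gamma_B)^*$ is defined (Definition \ref{def:cons}) precisely as the existence of a lift to a consistent word in $(\Sigma \times \Gamma)^*$, the result will follow by lifting, applying the previous proposition, and projecting back.

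First, I would apply Proposition \ref{prop:uniqword} to obtain the infinite word
\[w_0 = ((a_1,a_1'),(b_1,b_1'))((a_2,a_2'),(b_2,b_2'))\ldots \in (\Sigma \times \Gamma)^\omega\]
with the property that every word in $L(\TT)$ consistent with $x_A, x_B$ is a prefix of $w_0$. I would then define
\[w_0' = (a_1', b_1')(a_2', b_2')\ldots \in (\Sigma_B \times \Gamma_B)^\omega\]
to be the projection of $w_0$ onto its B-components, which I will use as the witness.

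Next, suppose $w = (a_1', b_1')\ldots (a_k', b_k') \in (\Sigma_B \times \Gamma_B)^*$ is consistent with $\TT, x_A, x_B$. By Definition \ref{def:cons} there exist $a_1,\ldots, a_k \in \Sigma_A$ and $b_1,\ldots,b_k \in \Gamma_A$ such that the lifted word
\[\tilde w = ((a_1,a_1'),(b_1,b_1'))\ldots((a_k,a_k'),(b_k,b_k'))\]
is consistent with $\TT, x_A, x_B$ in the sense of the first part of Definition \ref{def:cons}. Proposition \ref{prop:uniqword} then gives $\tilde w \leq w_0$, and since projection onto the B-components preserves the prefix order componentwise, this yields $w \leq w_0'$ as required.

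There is essentially no obstacle: the argument is a direct unpacking of the definition of consistency for words in $(\Sigma_B \times \Gamma_B)^*$ combined with Proposition \ref{prop:uniqword}. The only point worth being explicit about is that projection is monotone with respect to the prefix ordering, which is immediate since the projection acts letter by letter.
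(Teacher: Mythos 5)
Your proof is correct and follows exactly the paper's route: the paper obtains the corollary in one line by projecting the infinite word $w_0$ of Proposition \ref{prop:uniqword} onto $(\Sigma_B\times\Gamma_B)^{\omega}$, which is precisely your argument (lift a consistent $B$-word via Definition \ref{def:cons}, apply the proposition, project back). Your explicit remark that projection is monotone for the prefix order simply spells out what the paper leaves implicit.
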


So can we have $Y=(\Sigma_B\times \Gamma_B)^{\omega}$, and
$p_{Y|X_A,X_B}(y|x_A,x_B) = 1$ for $y=w_0$ as in Corollary \ref{cor:uniqpref}?

One reason why not is that this is not at all realistic: it corresponds to Bob
being able to conduct an experiment lasting for an infinite time.  Moreover it
would allow Bob to acquire an infinite (or at least unbounded) amount of
information, and it is not clear how this should be interpreted.

For this reason we will consider Bob's interaction with the system not as a
single experiment, but as a \emph{family} of experiments, parametrised by the
amount of time allowed; that is, by the length of traces which we consider as
outcomes.  Assuming for the moment that $F=Q$, we then have that the matrix
$p_{Y|X_A,X_B}$ is stochastic.

\begin{proposition}\label{prop:stochmat}
Let $\TT$ be an SDFST with $F=Q$, and let $Y=(\Sigma_B\times \Gamma_B)^k$ for
some $k\in \NN$.  Let the matrix $p_{Y|X_A,X_B}$ be defined by
$p_{Y|X_A,X_B}(y|x_A,x_B)=1$ if $y$ is compatible with $\TT, x_A$ and $x_B$, and
0 otherwise.  Then $p_{Y|X_A,X_B}$ is stochastic; that is, we have
\[\sum_{y\in Y} p_{Y|X_A,X_B}(y|x_A,x_B) = 1\] for all $x_A\in \XX_A$ and
$x_B\in \XX_B$.  
\end{proposition}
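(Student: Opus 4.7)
The plan is to show that for any $x_A\in \XX_A$ and $x_B\in \XX_B$ there is exactly one $y\in (\Sigma_B\times \Gamma_B)^k$ with $p_{Y|X_A,X_B}(y|x_A,x_B)=1$; since the matrix is already 0--1-valued by definition, this is enough to make the rows sum to 1.

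For uniqueness I would invoke Corollary \ref{cor:uniqpref} directly: there exists some $w_0\in (\Sigma_B\times \Gamma_B)^{\omega}$ such that every word in $(\Sigma_B\times \Gamma_B)^*$ which is consistent with $\TT, x_A, x_B$ is a prefix of $w_0$. In particular there is at most one consistent word of length $k$, namely the length-$k$ prefix $w_0^{(k)}$ of $w_0$.

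For existence, I would construct the full infinite $w_0 \in (\Sigma\times\Gamma)^{\omega}$ exactly as in the proof of Proposition \ref{prop:uniqword}: inductively, Alice's and Bob's strategies determine $a_i,a_i'$ from the traces they have seen, and the transition and output functions of $\TT$ then determine $q_i$ and $(b_i,b_i')$. Let $w^{(k)}$ denote the length-$k$ prefix of this infinite word. By construction $w^{(k)}$ satisfies limb (\ref{it:strats}) of Definition \ref{def:cons}. The key point here is the hypothesis $F=Q$: because every state is accepting, the partial run $q_0q_1\ldots q_k$ automatically witnesses acceptance, so $w^{(k)}\in L(\TT)$. Hence $w^{(k)}$ is consistent with $\TT, x_A, x_B$, and projecting onto $(\Sigma_B\times \Gamma_B)^k$ gives a consistent element of $Y$ (which must coincide with $w_0^{(k)}$ from the previous paragraph).

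There is no real obstacle here; the only subtle point is remembering why $F=Q$ matters. Without this hypothesis the unique candidate prefix might fail to land in $L(\TT)$ at length $k$, and so the row sum could drop below 1 (this is precisely the complication the paper flags as being postponed to the case $F\subsetneq Q$). Combining the two directions, $p_{Y|X_A,X_B}(\,\cdot\,|x_A,x_B)$ is the indicator of the single point $w_0^{(k)}\in Y$, which sums to $1$, as required.
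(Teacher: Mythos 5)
Your proof is correct and follows essentially the same route as the paper's: uniqueness of the length-$k$ outcome via Corollary \ref{cor:uniqpref}, and existence by noting that $F=Q$ makes every prefix of the infinite word $w_0$ from Proposition \ref{prop:uniqword} accepted, so its projection onto $(\Sigma_B\times \Gamma_B)^k$ is a consistent $y$. Your remark on exactly where the hypothesis $F=Q$ is used matches the paper's subsequent discussion of the $F\subsetneq Q$ case.
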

\begin{proof}
By Corollary \ref{cor:uniqpref}, we have that for fixed $x_A, x_B$ there is at
most one $y\in (\Sigma_B\times \Gamma_B)^k$ which is consistent with $\TT, x_A$
and $x_B$.  On the other hand it is clear from the definitions that if $F=Q$
then all prefixes of the infinite word $w_0$ from Proposition
\ref{prop:uniqword} are accepted by $\TT$.  Hence projecting $w_0$ onto
$(\Sigma_B\times \Gamma_B)^k$ gives a suitable $y$.
\end{proof}

Truncating at length $k$ also means that strategies $x_A, x_B$ can be viewed as
drawn from the spaces of functions $(\Sigma_A\times \Gamma_A)^{<k}\rightarrow
\Sigma_A$ and $(\Sigma_B\times \Gamma_B)^{<k}\rightarrow \Sigma_B$ respectively.
This means that the spaces $\XX_A$ and $\XX_B$ of possible strategies for Alice
and Bob are also finite.

We can now apply Theorem \ref{thm:aliceposs} to calculate the
information flow as the size of the largest possible set of outcomes that can
consistently be seen by Bob, and for convenience we will adopt this as a
definition.

\begin{definition}\label{def:intinf}
Let $\TT$ be an SDFST over input and output alphabets $\Sigma_A\times \Sigma_B$,
and let $\XX_A, \XX_B$ be the spaces of functions $(\Sigma_A\times
\Gamma_A)^*\rightarrow \Sigma_A$ and $(\Sigma_B\times \Gamma_B)^* \rightarrow
\Sigma_B$ respectively.  Define
\begin{multline*}\LL_k(\TT) = \max_{x_B\in \XX_B} \log\left| \left\{ y\in (\Sigma_B\times
\Gamma_B)^k \middle\vert \exists x_A\in \XX_A :\right.\right. \\ 
\left.\left. \text{$y$ is consistent with $\TT,
x_A$ and $x_B$}\right\} \right|.\end{multline*}
\end{definition}

Observe that if $F=Q$ then by Theorem \ref{thm:aliceposs} we have that
$\LL_k(\TT)=\LL_\infty(\C)$, where $\C$ is the interactive channel defined by
the matrix of conditional probabilities in the statement of Proposition
\ref{prop:stochmat}.

What about the case where $F \subsetneq Q$?  The treatment of this depends on
what we consider to be the meaning of a run ending in a non-accepting state.
One interpretation is that it represents a catastrophically bad outcome (say,
the intruder being detected) which must be avoided.  By Corollary
\ref{cor:bobdet} we may assume that Bob is employing a pure (i.e.
non-random) strategy, and so Alice can ensure that non-accepting runs are
avoided by avoiding particular $x_A$.  This means that Definition
\ref{def:intinf} is exactly right for this interpretation.

Another possible interpretation is that a run ending in a non-accepting state
produces some kind of `error' output, where all errors are indistinguishable.
This essentially increases the number of possible observations by Bob by either 1
or 0, depending on whether or not the extremal $x_B$ allows for non-accepting
runs.  This means that the amount of information is either $\LL_k(\TT)$ or $\log (1 +
2^{\LL_k(\TT)})$, which we consider to be a trivial difference.

A third possiblity of course is that we reject the very notion of a
non-accepting run, and consider only SDFSTs with $F=Q$.  Note that many kinds of
behaviour which may involve the system going into an `error' state and producing
only a fixed `dummy' output symbol can straightforwardly be modelled as an SDFST
with $F=Q$.

Which of these three options the reader considers most satisfactory is, to some
extent, a matter of personal taste.  However, since as noted above all are
modelled adequately by Definition \ref{def:intinf}, that is what we shall adopt
as the basic definition for the remainder of this analysis.

Definition \ref{def:intinf} is in some sense an intensional definition, in the
sense that it involves directly considering all possible strategies for Alice
and Bob.  It will be helpful to have a more extensional version. Definition
\ref{def:intinf} can be recast as \[\LL_k(\TT)=\max_{X\in\FF} \log|X|,\] where
$\FF\subseteq \PP((\Sigma_B\times \Gamma_B)^k)$ is the family of sets $X$ such
that there exists some $x_B\in\XX_B$ such that 
\begin{multline*}X=\left\{ y\in (\Sigma_B\times
\Gamma_B)^k \middle\vert \exists x_A\in \XX_A :\right.\\
\left. \text{$y$ is consistent with $\TT, x_A$ and $x_B$}\right\}.\end{multline*}
So having an extensional characterisation of $\LL_k(\TT)$
amounts to having a condition for a set $X$ to be a member of $\FF$.

\begin{theorem}\label{thm:firstdiff}
Let $\TT, \XX_A$ and $\XX_B$ be as above.
Let $\FF \subseteq \PP\left((\Sigma_B\times \Gamma_B)^*\right)$ be defined by
$Y\in \FF$ if and only if there exists some $x_B\in \XX_B$ such that 
\begin{multline*}Y = \left\{ y\in (\Sigma_B\times \Gamma_B)^* \middle\vert 
\exists x_A\in \XX_A :\right. \\
\left.\text{$y$ is consistent with $\TT, x_A$ and $x_B$}\right\}.\end{multline*}

Let $X\subseteq (\Sigma_B\times \Gamma_B)^*$ be arbitrary.  Then $X\subseteq X'$
for some $X'\in \FF$ if and only if 
\begin{enumerate}[(i)]
\item $X\subseteq \left. L(\TT) \right\rvert_{(\Sigma_B\times \Gamma_B)^*}$, and
\item $X$ does not contain two elements which first differ by an element of
$\Sigma_B$.  That is, we do not have $w_1,w_2\in X$ such that $w_1 = w (a,b) w'$
and $w_2 = w (a',b') w''$ with $w,w',w'' \in (\Sigma_B\times \Gamma)^*, a,a'\in
\Sigma_B$ and $b,b'\in \Gamma_B$ with $a\neq a'$,
\end{enumerate}
where the notation $\left. L(\TT)\right\rvert_{(\Sigma_B\times \Gamma_B)^*}$
means the projection of $L(\TT)\subseteq ((\Sigma_A\times \Sigma_B)\times
(\Gamma_A\times\Gamma_B))^*$ onto the set $(\Sigma_B\times\Gamma_B)^*$.
\end{theorem}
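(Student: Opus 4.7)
I would prove the two directions of the biconditional separately. The forward direction is essentially forced by the definitions of $\FF$ and of consistency; the reverse direction requires constructing a suitable Bob strategy $x_B$ from $X$ and exhibiting the desired containment. The single delicate point will be the well-definedness of $x_B$ in the reverse direction, which is precisely what condition (ii) is tailored to guarantee.

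\textbf{Forward direction.} Suppose $X\subseteq X'$, where $X'\in \FF$ is witnessed by some $x_B\in \XX_B$. Condition (i) is immediate: each $y\in X'$ is, by definition, the projection onto $(\Sigma_B\times \Gamma_B)^*$ of some word in $L(\TT)$. For condition (ii), suppose toward contradiction that $w_1 = t(a,b)w'$ and $w_2 = t(a',b')w''$ both lie in $X$ with $a,a'\in \Sigma_B$ and $a\neq a'$. Membership of $w_1,w_2$ in $X'$ yields Alice strategies $x_A^1, x_A^2$ witnessing consistency of $w_1,w_2$ with $\TT$ and $x_B$. Applying limb \ref{it:strats} of Definition \ref{def:cons} at position $|t|+1$ (Bob's trace so far is exactly $t$ in both cases) gives $a = x_B(t) = a'$, a contradiction.

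\textbf{Reverse direction.} Assume both conditions of the theorem, and define $x_B:(\Sigma_B\times \Gamma_B)^*\rightarrow \Sigma_B$ by setting $x_B(t) = a$ whenever some element of $X$ has the form $t(a,b)y'$ with $a\in \Sigma_B, b\in \Gamma_B$, and $x_B(t)$ to an arbitrary fixed value otherwise; condition (ii) ensures this is well-defined. Let $X'\in\FF$ be the set witnessed by $x_B$, and fix any $y = (a_1',b_1')\ldots(a_k',b_k')\in X$. By condition (i) there exists a word $w = ((a_1,a_1'),(b_1,b_1'))\ldots((a_k,a_k'),(b_k,b_k'))\in L(\TT)$ projecting to $y$. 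Reading Alice's actions off $w$, put $x_A((a_1,b_1)\ldots(a_{i-1},b_{i-1})) = a_i$ for $1\leq i\leq k$ and extend arbitrarily. It remains to verify that $w$ is consistent with $\TT, x_A, x_B$: membership in $L(\TT)$ and Alice's moves hold by construction, and for each $i\leq k$ the prefix $(a_1',b_1')\ldots(a_{i-1}',b_{i-1}')$ is extended in $y\in X$ by $(a_i',b_i')$, so by construction $x_B$ sends this prefix to $a_i'$. Hence $y\in X'$, giving $X\subseteq X'$ as required.
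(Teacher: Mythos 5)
Your proof is correct and follows essentially the same route as the paper: the forward direction extracts the contradiction $a = x_B(t) = a'$ from limb (\ref{it:strats}) of Definition \ref{def:cons}, and the reverse direction builds $x_B$ as the partial function determined by prefixes of elements of $X$ (well-defined exactly by condition (ii)) and reads $x_A$ off a witnessing word in $L(\TT)$, extending both arbitrarily. No gaps; this matches the paper's argument in structure and detail.
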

\begin{proof}
The `only if' direction is straightforward.  Part (i) is immediate from the
definitions, and for part (ii) we must have $a = x_B(w) = a'$ (for the relevant
$x_B$).

For the `if' direction, suppose that $X$ satisfies the two conditions in the
statement of the theorem.  Define the partial function $x:(\Sigma_B\times
\Gamma_B)^* \rightharpoondown \Sigma_B$ by $x(w') = a$ whenever $w'(a,b) \leq w$
for some $w\in X$ and some $b\in \Gamma_B$.  This is well-defined by condition
(ii).  Define $x_B:(\Sigma_B\times \Gamma_B)^*\rightarrow \Sigma_B$ to be $x$,
extended arbitrarily where $x$ is undefined.  We claim that 
\begin{multline*}X\subseteq Y = \left\{ y\in (\Sigma_B\times \Gamma_B)^* \middle\vert \exists
x_A\in \XX_A :\right. \\
\left. \text{$y$ is consistent with $\TT, x_A$ and $x_B$}\right\}.\end{multline*}

Indeed, let $w\in X$ be arbitrary.  Plainly $w$ is consistent with $x_B$.  Since $w\in
\left.L(\TT)\right\rvert_{(\Sigma_B\times\Gamma_B)^*}$, there exists some
$w'\in L(\TT)$ such that $\left. w'\right\rvert_{(\Sigma_B\times \Gamma_B)^*}
= w$.  Define the partial function $x':(\Sigma_A\times
\Gamma_A)^*\rightharpoondown \Sigma_A$ by $x'(w'') = a$ whenever $w'' (a,b) \leq
w'$ for some $b\in \Gamma_A$.  Let $x_A:(\Sigma_A\times \Gamma_A)^*\rightarrow
\Sigma_A$ be an arbitrary total extension of $x'$.  Then plainly $w'$ is
consistent with $x_A$, and is also consistent with $x_B$ since $w$ was.  Hence
$w$ is consistent with $\TT, x_A$ and $x_B$, as required.
\end{proof}

Truncating to length $k$, and observing that 
\[\max_{X\in \FF} \log|X| = \max_{X\subseteq X' \in \FF} \log|X|\]
gives

\begin{corollary}\label{cor:firstdiff}Let $\TT, \XX_A$ and $\XX_B$ be as above.  Then we have
\[\LL_k(\TT) = \max_{X\in \FF'_k} \log|X|,\]
where $\FF'_k\subseteq \PP\left(\left.L(\TT)_{=k}
\right\rvert_{(\Sigma_B\times\Gamma_B)^k}\right)$ is the collection of sets which do
not contain two words which first differ by an element of $\Sigma_B$ (and this
has the same meaning as in part (ii) of Theorem \ref{thm:firstdiff}).
\end{corollary}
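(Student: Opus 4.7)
The plan is to derive the corollary as a direct consequence of Theorem \ref{thm:firstdiff} combined with two simple observations: the truncation of strategies and traces to length $k$, and the fact that maximising $\log|X|$ over $\FF$ is the same as maximising over all subsets of members of $\FF$.

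First, I would recast Definition \ref{def:intinf} as $\LL_k(\TT) = \max_{X\in \FF_k}\log|X|$, where $\FF_k\subseteq \PP((\Sigma_B\times\Gamma_B)^k)$ is the length-$k$ restriction of the family $\FF$ of Theorem \ref{thm:firstdiff}; that is, $Y\in \FF_k$ if there exists some $x_B\in \XX_B$ with
\[Y = \left\{y\in (\Sigma_B\times \Gamma_B)^k \;\middle\vert\; \exists x_A\in \XX_A : \text{$y$ consistent with $\TT,x_A,x_B$}\right\}.\]
Next I would note the hint stated just before the corollary: clearly $\max_{X\in \FF_k}\log|X| \le \max_{X\subseteq X'\in \FF_k}\log|X|$ (take $X=X'$), and conversely if $X\subseteq X'$ then $|X|\le|X'|$. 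Hence $\LL_k(\TT) = \max_{X\subseteq X'\in \FF_k}\log|X|$, and it remains to identify the collection of sets $X$ which are contained in some element of $\FF_k$.

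Then I would apply Theorem \ref{thm:firstdiff}. The theorem characterises the sets $X\subseteq (\Sigma_B\times\Gamma_B)^*$ which are contained in some $X'\in \FF$ (the untruncated family) as exactly those satisfying (i)~$X\subseteq \left.L(\TT)\right\rvert_{(\Sigma_B\times\Gamma_B)^*}$ and (ii) no two elements of $X$ first differ on a letter of $\Sigma_B$. When we restrict attention to length-$k$ words, being contained in some $X'\in \FF_k$ is equivalent to being a subset of the length-$k$ elements of some $X'\in \FF$: the strategy $x_B$ constructed in the proof of Theorem \ref{thm:firstdiff} depends only on prefixes of $X$, and the length-$k$ truncation of $X'$ remains in $\FF_k$. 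Condition (i) for length-$k$ words specialises to $X\subseteq \left. L(\TT)_{=k}\right\rvert_{(\Sigma_B\times\Gamma_B)^k}$, while condition (ii) is unchanged. This is precisely the definition of $\FF'_k$.

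Putting this together gives $\LL_k(\TT) = \max_{X\in\FF'_k}\log|X|$ as required. The argument is essentially bookkeeping on top of Theorem \ref{thm:firstdiff}; the only step requiring any care is verifying that truncation to length $k$ interacts cleanly with membership of $\FF$, which follows because the strategy $x_B$ built in the proof of Theorem \ref{thm:firstdiff} is defined from the prefix structure of $X$ alone and is thus insensitive to whether we cap the length of traces at $k$.
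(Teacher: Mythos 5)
Your argument is correct and follows essentially the same route as the paper, which obtains the corollary from Theorem \ref{thm:firstdiff} precisely by truncating to length $k$ and noting that $\max_{X\in\FF}\log|X| = \max_{X\subseteq X'\in\FF}\log|X|$. Your additional check that truncation interacts cleanly with membership of $\FF$ is harmless bookkeeping the paper leaves implicit.
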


\subsection{Reduction to automata}

In this section, we show how to reduce the problem of computing $\LL_k(\TT)$
from a problem about transducers to a problem which mentions only automata.  The
first step is to produce an automaton whose language is in correspondence with
Bob's interface with $\TT$.

\begin{definition}\label{def:transaut}Let $\TT = (Q, q_0, F, \Sigma_A\times \Sigma_B, \Gamma_A\times
\Gamma_B, \delta, \sigma)$ be an SDFST.  Define the nondeterministic finite
automaton $\A_{\TT} = (Q\cup(Q\times \Gamma_B), q_0, F, \Sigma_B\cup\Gamma_B,
\Delta)$, where
\[\Delta(q,a') = \left\{(\delta(q,(a,a')),\snd(\sigma(q,(a,a')))) \middle\vert a\in
\Sigma_A
\right\}\]
for all $q\in Q$ and $a'\in \Sigma_B$, $\Delta(q,b')=\emptyset$ for all $b'\in
\Gamma_B$, and
\[\Delta((q,b'),x) = \begin{cases}\{q\} &\quad \text{if $x=b'$} \\
\emptyset &\quad \text{otherwise}\end{cases}\]
for all $(q,b')\in Q\times \Gamma_B$ and $x\in \Sigma_B\cup \Gamma_B$.
\end{definition}

Informally, we introduce an auxiliary state for each pair $(q,b')\in Q\times
\Gamma_B$ to represent the behaviour `emit the event $b'$ and then go into state
$q$'.  For states $q,q'\in Q$ and events $a'\in \Sigma_B, b'\in \Gamma_B$ we
have a transition from $q$ to $(q',b')$ if and only there exist some $a\in
\Sigma_A$ and $b\in \Gamma_A$ such that $\delta(q,(a,a')) = q'$ and
$\sigma(q,(a,a')) = (b,b')$.  In the language of Communicating Sequential
Procceses, this corresponds to treating Alice's behavious as nondeterministic
and hiding all of her events: that is, the familiar \emph{lazy abstraction}
formulation of noninterference~\cite{roscoe1994non}.

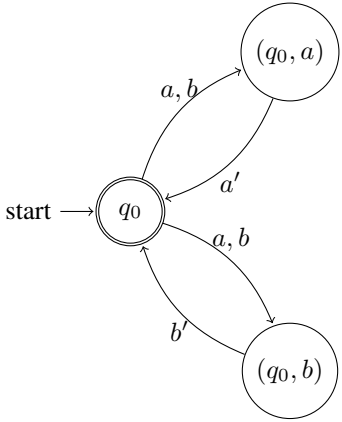
\begin{figure}[htbp]
\centering
\begin{tikzpicture}[shorten >=1pt,node distance=3cm,on grid,auto]
   \node[state,initial, accepting] (q_0)   {$q_0$};
   \node[state] (q_0a) [above right=of q_0] {$(q_0,a)$};
   \node[state] (q_0b) [below right=of q_0] {$(q_0,b)$};
   \path[->]
    (q_0) edge [bend left=25] node [above] {$a,b$} (q_0a)
          edge [bend left=25] node [above] {$a,b$} (q_0b)
    (q_0a) edge [bend left=25] node [below] {$a'$} (q_0)
    (q_0b) edge [bend left=25] node [below] {$b'$} (q_0);
\end{tikzpicture}
\caption{Automaton corresponding to the relay system transducer shown in Figure
\ref{fig:relay}.}\label{fig:transaut}
\end{figure}

The following lemma is immediate from the definitions, and expresses the fact
that the words accepted by $\A_\TT$ are in precise correspondence with the words
accepted by $\TT$, projected onto $\Sigma_B\times \Gamma_B$.

\begin{lemma}\label{lem:flat}
Let $f:(\Sigma_B\times \Gamma_B)^* \rightarrow (\Sigma_B\cup \Gamma_B)^*$ be the
\emph{flattening} operation defined by $f((a_1,b_1)\ldots(a_k,b_k)) =
a_1b_1\ldots a_kb_k$.  Then we have
\[L(\A_{\TT}) = f\left(\left. L(\TT)
\right\rvert_{(\Sigma_B\times\Gamma_B)^*}\right).\]
\end{lemma}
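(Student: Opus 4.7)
The plan is to show both inclusions by exhibiting an explicit bijective correspondence between accepting runs of $\TT$ and accepting runs of $\A_{\TT}$. The key structural observation is that by Definition \ref{def:transaut}, the transition relation $\Delta$ partitions $\A_{\TT}$'s states into two ``layers'': transitions out of $q\in Q$ consume only a symbol from $\Sigma_B$ and land in $Q\times \Gamma_B$, while transitions out of $(q',b')\in Q\times \Gamma_B$ consume only $b'\in \Gamma_B$ and land back in $Q$. Hence every accepting run of $\A_{\TT}$ alternates rigidly between these two layers, so in particular the label of such a run has even length and lies in $(\Sigma_B\Gamma_B)^*$, which is exactly the image of the flattening map $f$.

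For the inclusion $f\bigl(L(\TT)|_{(\Sigma_B\times\Gamma_B)^*}\bigr) \subseteq L(\A_{\TT})$, I take an accepting run $q_0, q_1, \ldots, q_k$ of $\TT$ on the word $((a_1,a_1'),(b_1,b_1'))\cdots((a_k,a_k'),(b_k,b_k'))$, with $q_k\in F$, and build the run
\[q_0 \xrightarrow{a_1'} (q_1,b_1') \xrightarrow{b_1'} q_1 \xrightarrow{a_2'} (q_2,b_2') \xrightarrow{b_2'} \cdots \xrightarrow{a_k'} (q_k,b_k') \xrightarrow{b_k'} q_k\]
in $\A_{\TT}$. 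Each step is legal: the transition $q_{i-1}\xrightarrow{a_i'}(q_i,b_i')$ is witnessed by $a_i\in \Sigma_A$ in the defining formula for $\Delta(q_{i-1},a_i')$, and the transition $(q_i,b_i')\xrightarrow{b_i'} q_i$ is immediate from the second clause of Definition \ref{def:transaut}. Since $q_k\in F$, this is accepting and produces exactly $f((a_1',b_1')\cdots(a_k',b_k'))$.

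For the reverse inclusion, I take any accepting run of $\A_{\TT}$. By the layered structure noted above it must have the form $q_0, (q_1,b_1'), q_1, (q_2,b_2'), q_2, \ldots, q_k$ with $q_k\in F$, and its label is $a_1'b_1'\cdots a_k'b_k' = f((a_1',b_1')\cdots(a_k',b_k'))$ for the appropriate $a_i'\in \Sigma_B$. For each transition $q_{i-1}\xrightarrow{a_i'} (q_i,b_i')$, the definition of $\Delta$ furnishes some $a_i\in \Sigma_A$ with $\delta(q_{i-1},(a_i,a_i'))=q_i$ and $\snd(\sigma(q_{i-1},(a_i,a_i')))=b_i'$; setting $b_i = \fst(\sigma(q_{i-1},(a_i,a_i')))$ yields a word in $L(\TT)$ whose projection onto $(\Sigma_B\times\Gamma_B)^*$ flattens to the original label.

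The only ``hard'' part is really just the observation about the alternating layer structure of $\A_{\TT}$, which is where the correct matching between symbols of $\A_{\TT}$ and pairs in $\Sigma_B\times\Gamma_B$ comes from; once that is noted, both directions are bookkeeping on runs.
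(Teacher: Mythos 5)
Your proof is correct, and it simply makes explicit the run-by-run correspondence between $\TT$ and $\A_{\TT}$ that the paper invokes when it declares the lemma ``immediate from the definitions''; the alternating two-layer structure you identify (states in $Q$ reading $\Sigma_B$-symbols into $Q\times\Gamma_B$, which then deterministically emit their $\Gamma_B$-symbol back into $Q$) is exactly the intended content of Definition \ref{def:transaut}. No discrepancy with the paper's approach.
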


Note that since elements of $\Sigma_B$ appear at odd-numbered positions in
traces of $\A_\TT$ and elements of $\Gamma_B$ appear at even-numbered positions,
we may assume without loss of generality that $\Sigma_B$ and $\Gamma_B$ are
disjoint.  Then combining Lemma \ref{lem:flat} with Corollary
\ref{cor:firstdiff} gives

\begin{theorem}\label{thm:autdiff}Let $\TT$ be an SDFST as above such that $\Sigma_B$ and $\Gamma_B$
are disjoint.  Then
\[\LL_k(\TT) = \max_{X\in \FF_k} \log |X|,\]
where $\FF_k\subseteq \PP\left(L\left(\A_\TT\right)_{=2k} \right)$ is the
collections of sets which do not contain two words which first differ by an
element of $\Sigma_B$; that is, for $X\in \FF_{k}$ we do not have $w_1,w_2\in X$
with $w_1=w a w', w_2 = w a' w''$, with $w,w',w''\in (\Sigma_B\cup \Gamma_B)^*$
and $a\neq a' \in \Sigma_B$.
\end{theorem}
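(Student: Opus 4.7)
The plan is to deduce Theorem \ref{thm:autdiff} from Corollary \ref{cor:firstdiff} by transporting its characterisation across the flattening bijection $f$ of Lemma \ref{lem:flat}.

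First I would observe that under the hypothesis $\Sigma_B \cap \Gamma_B = \emptyset$, the flattening map $f$ restricts to a bijection from $(\Sigma_B\times\Gamma_B)^k$ onto the set of length-$2k$ words in $(\Sigma_B\cup\Gamma_B)^{2k}$ whose odd positions lie in $\Sigma_B$ and whose even positions lie in $\Gamma_B$: the disjointness ensures that a flattened word can be re-zipped uniquely. Combining this with Lemma \ref{lem:flat} gives a bijection $f:\left.L(\TT)_{=k}\right\rvert_{(\Sigma_B\times\Gamma_B)^k} \to L(\A_\TT)_{=2k}$, matching the ambient sets over which $\FF'_k$ (of Corollary \ref{cor:firstdiff}) and $\FF_k$ (of the present theorem) range.

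Second I would check that $f$ carries the forbidden-pair condition of $\FF'_k$ exactly onto the forbidden-pair condition of $\FF_k$. This is a short case analysis. Suppose $w_1 = (a_1,b_1)\ldots(a_k,b_k)$ and $w_2 = (a'_1,b'_1)\ldots(a'_k,b'_k)$ first differ at index $i$. Either $a_i \neq a'_i$, in which case $f(w_1)$ and $f(w_2)$ first differ at the odd position $2i-1$, whose symbol belongs to $\Sigma_B$; or $a_i = a'_i$ and $b_i \neq b'_i$, in which case $f(w_1)$ and $f(w_2)$ first differ at the even position $2i$, whose symbol belongs to $\Gamma_B$. Because $\Sigma_B$ and $\Gamma_B$ are disjoint, the type of the first differing symbol in the flattened words determines which case we are in, so the two notions of ``first differ by an element of $\Sigma_B$'' correspond exactly under $f$.

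Consequently, the pointwise lift of $f$ gives a cardinality-preserving bijection $\FF'_k \to \FF_k$, and the identity $\max_{X\in\FF'_k}\log|X| = \max_{X\in\FF_k}\log|X|$ combined with Corollary \ref{cor:firstdiff} yields the claim. The only subtlety in the argument is making sure disjointness of $\Sigma_B$ and $\Gamma_B$ is invoked at the right place, namely to decouple odd and even positions in the flattened word so that ``first differs at an odd position'' and ``first differs by a $\Sigma_B$-letter'' coincide; beyond that bookkeeping there is no real obstacle.
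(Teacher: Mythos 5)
Your proposal is correct and matches the paper's approach exactly: the paper obtains Theorem \ref{thm:autdiff} precisely by combining Lemma \ref{lem:flat} with Corollary \ref{cor:firstdiff}, using the observation that $\Sigma_B$-letters sit at odd positions and $\Gamma_B$-letters at even positions of flattened words. Your case analysis showing that the ``first differ by an element of $\Sigma_B$'' condition transfers across the flattening bijection is just a more explicit write-up of the step the paper leaves implicit.
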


Note that an alternative notation for this theorem (and, \emph{mutatis
mutandis}, Corollary \ref{cor:firstdiff}) would be to define a single family
$\FF\subseteq \PP\left((\Sigma_B\cup \Gamma_B)^*\right)$ consisting of the sets
which do not contain words first differing on an element of $\Sigma_B$, and then
say that
\[\LL_k(\TT) = \max_{X\in \FF} \log\left|X\cap L(\A_\TT)_{=2k}\right|.\]

We have therefore reduced computing the information flow permitted by a
deterministic interactive system to an instance of a more general problem over
finite automata, which we call the \emph{$\Sigma$-deterministic subset growth}
problem.

\begin{definition}Let $\Sigma, \Gamma$ be disjoint finite sets.  A set
$X\subseteq (\Sigma\cup \Gamma)^*$ is
\emph{$\Sigma$-deterministic} if it does not contain two words which first
differ by an element of $\Sigma$; that is, we do not have $w_1,w_2\in X$ with
$w_1=waw', w_2=wa'w''$, with $w,w',w''\in (\Sigma\cup \Gamma)^*$ and $a\neq a'
\in \Sigma$.

For a nondeterministic finite automaton $\A$ over alphabet $\Sigma\cup \Gamma$,
define
\[D_k(\A) = \max_{X\in \FF_k} |X|,\]
where $\FF_k$ consists of the $\Sigma$-deterministic subsets of $L(\A)_{=k}$.
\end{definition}

\begin{problem}[$\Sigma$-deterministic subset growth]\label{prob:detgrowth}
Given a nondeterministic finite automaton $\A$ over $\Sigma\cup \Gamma$,
determine the growth rate of $D_k(\A)$.
\end{problem}

Of course, the statement of this problem is somewhat informal, in that
the meaning of `determine the growth rate' is not precisely specified.  This is
in some sense inevitable, considering that $D_k(\A)$ is an infinite collection
of values, so many types of results are possible.  Below we will obtain results 
on the asymptotic growth of $D_k(\A)$ as $k\rightarrow \infty$.

\subsection{Antichains}\label{sec:autantichains}

In this section we will see that Problem \ref{prob:detgrowth} can be further 
reduced, to that of computing the `width' of $L(\A)$.

\begin{definition}Let $X$ be a set, and let $\leq$ be a
partial order on $X$.  Then the \emph{lexicographic} order induced by $\leq$ on
$X^*$, denoted $\preceq$, is defined by 
\begin{align*}
\forall w\in X^*&: \epsilon \preceq w \text{ (and $w\not\preceq \epsilon$ if $w\neq
\epsilon$), and}\\
\forall x,y\in X, w,w' \in X^* &: xw \preceq yw' \text{ if and only if either
$x < y$} \\ &\qquad\qquad \text{or $x=y$ and $w\preceq w'$.}
\end{align*}
\end{definition}

Observe that $\preceq$ defines a partial order.  Indeed, suppose that $w_1,w_2\in
X^*$ are of minimum total length such that $w_1\preceq w_2, w_2\preceq w_1$ but
$w_1 \neq w_2$.  Trivially if $w_1 = \epsilon$ then also $w_2 = \epsilon = w_1$
(and vice versa).  Otherwise we have $w_1 = xw, w_2 = yw'$, and either $x < y$
or $x=y$ and $w\preceq w'$, and on the other hand either $y < x$ or $y=x$ and
$w'\preceq w$.  Hence we have $y=x$ and both $w\preceq w'$ and $w' \preceq w$, so by
induction $w=w'$.  Hence $w_1=w_2$, a contradiction, so indeed $\preceq$ is
antisymmetric.

Similarly suppose that $w_1, w_2, w_3\in X^*$ are of minimum total length such
that $w_1\preceq w_2$ and $w_2\preceq w_3$ but $w_1\not\preceq w_3$.  Since
$w_1\not\preceq w_3$ we have $w_1\neq \epsilon$, and plainly
$w_1\neq w_2$ and $w_2\neq w_3$ and so $w_2, w_3 \neq \epsilon$.  Write $w_1 =
xw, w_2 = yw'$ and $w_3 = zw''$.  If $x <y$ then (since $y\leq z$) we have $x <
z$ and so $w_1\preceq w_3$.  Similarly if $y<z$ then (since $x\leq y$) we have
$x<x$ so $w_1\preceq w_3$.  Hence we have $x=y=z$ and $w\preceq w'$ and $w' \preceq
w''$.  But then by induction we have $w\preceq w''$ and so $w_1\preceq w_3$, a
contradiction.  Hence indeed $\preceq$ is transitive and so (since we have also
shown it is antisymmetric, and it is trivially reflexive) it is a partial order.

\begin{figure}[htb]
\centering
%\resizebox{0.55\textwidth}{!}{
\begin{tikzpicture}[node distance=1cm]
\node [] (info) {$\sup_{((X_A,X_B),Y)} \LL_\infty((X_A,X_B),Y)$};
\node [below=of info] (uniform) {$\begin{aligned}&\sup_{x_B\in \XX_B} \log \left| \left\{y\in\YY
\middle\vert \exists x_A\in \XX_A: \right.\right. \\ 
&\qquad\qquad\left. \left. p_{Y|X_A,X_B}(y|x_A,x_B)=1\right\}\right|\end{aligned}$};
\node [below=of uniform] (firstdiff) {$\begin{aligned}&\max \left\{\log |X|\middle\vert X\subseteq
\left.L(\TT)\right\rvert_{(\Sigma_B\times \Gamma_B)^k},\right. \\ &\qquad\qquad \left.\text{ no $w,w'\in X$
first differ in $\Sigma_B$}\right\} \end{aligned}$};
\node [below=of firstdiff] (auto) {$\max \left\{\log|X| \middle\vert X\subseteq
L\left(\A_\TT \right)_{=2k} \text{ is $\Sigma_B$-deterministic}\right\}$};
\node [below=of auto] (antichain) {$\max \left\{\log|X| \middle\vert X\subseteq
L\left(\A_\TT \right)_{=2k} \text{ is an antichain}\right\}$};

\path[->] (info) edge node [right] {Theorem \ref{thm:aliceposs}} (uniform)
  (uniform) edge node [right] {Corollary \ref{cor:firstdiff}} (firstdiff)
  (firstdiff) edge node [right] {Theorem \ref{thm:autdiff}} (auto)
  (auto) edge node [right] {Theorem \ref{thm:infanti}} (antichain); 
\end{tikzpicture}
%}
\caption{The structure of Sections \ref{sec:channels} and \ref{sec:detint}}\label{fig:chapstruct}
\end{figure}
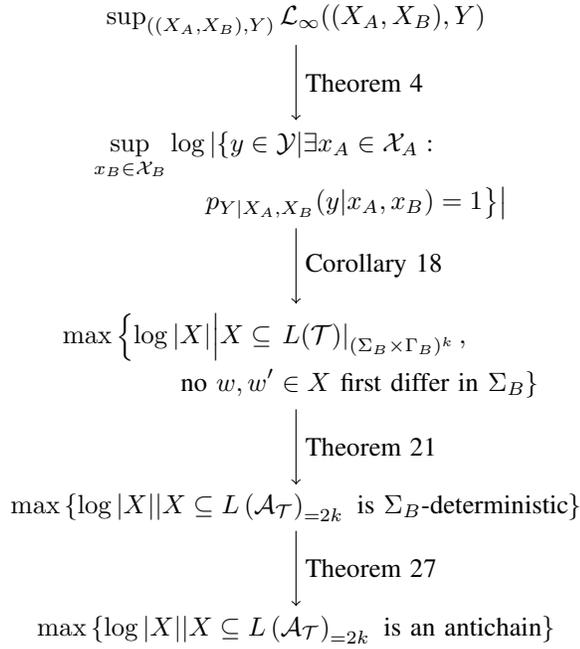

The study of partially ordered sets is often concerned with \emph{chains} (sets
wherein any two elements are comparable) and \emph{antichains} (sets where no
two elements are comparable).

\begin{definition}
Let $X$ be a partially ordered set, partially orderd by $\leq$.  
A set $Y\subseteq X$ is a \emph{chain} if
for any $x,y\in Y$ we have $x\leq y$ or $y\leq x$.  $Y$ is an \emph{antichain}
if for any $x,y \in Y$ such that $x\leq y$ we have $x=y$.  Let $Y \subseteq X$
be an antichain of maximum size.  Then $|Y|$ is the \emph{width} of $X$, denoted
$w(X)$.
\end{definition}

An example of the relevance of the width of a partially ordered set to its
structure is given by the celebrated theorem of Robert
Dilworth~\cite{dilworth1950decomposition}.

\begin{theorem}[Dilworth, 1950]
Let $X$ be a partially ordered set.  Let $k$ be minimal such that
$X=Y_1\cup\ldots \cup Y_k$ with each $Y_k$ a chain.  Then $k=w(X)$.
\end{theorem}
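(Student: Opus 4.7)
The plan is to prove the two inequalities $k \leq w(X)$ and $w(X) \leq k$ separately, assuming $X$ is finite as will be the case in the paper's applications (the definition of width already presumes the maximum is attained). The direction $w(X) \leq k$ is immediate: if $X = Y_1 \cup \ldots \cup Y_k$ is a chain decomposition and $A$ is any antichain, then each $Y_i$ contains at most one element of $A$ (elements of a chain are pairwise comparable, but those of an antichain are pairwise incomparable), so $|A| \leq k$. Taking $A$ to be a maximum antichain gives the bound.

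The harder direction is that $X$ decomposes into $w(X)$ chains, which I would prove by induction on $|X|$. The base case $|X| \leq 1$ is trivial. For the inductive step, set $w = w(X)$ and consider two cases. In Case 1, suppose there exists a maximum antichain $A$ which is neither the set of all maximal elements of $X$ nor the set of all minimal elements. Define $A^{\downarrow} = \{x \in X : x \leq a \text{ for some } a \in A\}$ and $A^{\uparrow} = \{x \in X : x \geq a \text{ for some } a \in A\}$. Because $A$ is a maximum antichain, every element of $X$ is comparable to some $a \in A$ (otherwise $A$ could be extended), so $X = A^{\downarrow} \cup A^{\uparrow}$; and both sets are proper subsets of $X$ by the case hypothesis. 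Each contains $A$ as an antichain and has width exactly $w$, so by induction they admit chain decompositions $C_1, \ldots, C_w$ and $D_1, \ldots, D_w$ respectively. Each $a \in A$ appears as the top of a unique $C_i$ and the bottom of a unique $D_j$ (argued below), and pairing the chains that share a common element of $A$ produces $w$ chains whose union is $X$.

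In Case 2, every maximum antichain consists entirely of maximal or entirely of minimal elements. Pick any minimal $y$ and a maximal $x$ with $y \leq x$ (choose $y$ first, then take any $x \geq y$ maximal; possibly $x = y$). Removing $\{x, y\}$ reduces the width to $w - 1$, since any surviving antichain of size $w$ would have to be entirely maximal or entirely minimal in $X$, but we have deleted one element of each such antichain. By induction $X \setminus \{x, y\}$ decomposes into $w - 1$ chains, and adjoining $\{x, y\}$ (a chain since $y \leq x$) completes the decomposition.

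The main obstacle is the pairing step in Case 1, namely the claim that each $a \in A$ is the maximum of exactly one $C_i$ and the minimum of exactly one $D_j$. This follows by applying the easy direction internally: $A$ is a width-$w$ antichain inside $A^{\downarrow}$, so in any decomposition into $w$ chains each $C_i$ meets $A$ in at most one element, hence by pigeonhole exactly one element. Moreover that element is necessarily the maximum of $C_i$ within $A^{\downarrow}$, because anything strictly above it in $A^{\downarrow}$ would lie below some other $a' \in A$, forcing $a < a'$ and contradicting antichain-ness of $A$. The symmetric argument handles $A^{\uparrow}$, after which the chain $C_i \cup D_j$ is genuinely a chain (all of $C_i$ lies $\leq a$, all of $D_j$ lies $\geq a$) and the $w$ combined chains cover $A^{\downarrow} \cup A^{\uparrow} = X$.
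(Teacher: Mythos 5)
The paper does not actually prove this statement: Dilworth's theorem is quoted as a classical result with a citation to Dilworth's 1950 paper, purely to illustrate the relevance of the width, and nothing later in the paper depends on it. So there is no in-paper proof to compare with; judged on its own, your argument is correct for finite posets and is essentially the standard inductive proof of Dilworth's theorem: the easy direction via one antichain element per chain, and for the hard direction an induction that splits on whether some maximum antichain $A$ is neither the set of all maximal nor of all minimal elements, glues chain decompositions of $A^{\downarrow}$ and $A^{\uparrow}$ at the elements of $A$ (with $A$-elements as tops, respectively bottoms, of their chains), and otherwise deletes a minimal $y$ and a maximal $x\geq y$ to drop the width by one. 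Two steps you leave implicit are worth spelling out: first, that $A^{\downarrow}=X$ forces $A$ to be exactly the set of all maximal elements (and dually for $A^{\uparrow}$), which is what justifies ``proper subsets by the case hypothesis''; second, in Case 2, that a size-$w$ antichain consisting of maximal elements must equal the full set of maximal elements (being a maximum antichain contained in the antichain of maximal elements), so it necessarily contains the deleted $x$. Your restriction to finite posets is harmless in this context: the paper's definition of width already presupposes a maximum antichain, and the theorem is only ever relevant there for the finite sets $L(\A)_{=k}$.
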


The relevance of these ideas to Problem \ref{prob:detgrowth} is established by
the following theorem.

\begin{theorem}\label{thm:infanti}Let $\Sigma, \Gamma$ be disjoint sets.  Define the partial order
$\leq$ on $\Sigma\cup \Gamma$ by setting $\left.\leq\right \rvert_{\Sigma}$ to
be an arbitrary linear order on $\Sigma$, and setting $x\not\leq y, y\not\leq x$ for all
$x\in \Gamma$ and all $y\in \Sigma\cup \Gamma$ with $y \neq x$.

Let $X\subseteq (\Sigma\cup\Gamma)^k$ be arbitrary.  Then $X$ is
$\Sigma$-deterministic if and only if it is an antichain with respect to the
lexicographic order induced by $\leq$.
\end{theorem}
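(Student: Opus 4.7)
The plan is to handle both directions of the biconditional by analysing the first position at which two words of $X$ differ, exploiting the fact that the order $\leq$ makes every element of $\Gamma$ incomparable to every other element of $\Sigma\cup\Gamma$.

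For the `only if' direction, suppose $X$ is $\Sigma$-deterministic and, for a contradiction, suppose that there exist distinct $w_1,w_2\in X$ with $w_1\preceq w_2$. Since $|w_1|=|w_2|=k$ and $w_1\neq w_2$, there is a first position on which they differ, so we can write $w_1=w a w'$ and $w_2=w a' w''$ with $a\neq a'$ and $w\in(\Sigma\cup\Gamma)^*$. Unwinding the recursive definition of $\preceq$ along the common prefix $w$ reduces $w_1\preceq w_2$ to $aw'\preceq a'w''$, and since $a\neq a'$ this forces $a<a'$. By the definition of $\leq$, any element of $\Gamma$ is incomparable with everything else, so $a<a'$ implies that both $a$ and $a'$ lie in $\Sigma$. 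But then $w_1$ and $w_2$ first differ by an element of $\Sigma$, contradicting $\Sigma$-determinism.

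For the `if' direction, suppose $X$ is an antichain with respect to $\preceq$ and that, for a contradiction, $X$ contains $w_1,w_2$ which first differ by an element of $\Sigma$; write $w_1=waw'$, $w_2=wa'w''$ with $a,a'\in\Sigma$ and $a\neq a'$. Since $\leq\!\!\restriction_\Sigma$ is a linear order, we may assume $a<a'$. Unwinding the definition of $\preceq$ along $w$ letter by letter (each step uses the ``$x=y$ and $w\preceq w'$'' clause, since a letter is equal to itself), it suffices to establish $aw'\preceq a'w''$, which is immediate from $a<a'$ via the ``$x<y$'' clause. Hence $w_1\preceq w_2$ with $w_1\neq w_2$, contradicting the assumption that $X$ is an antichain.

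The only subtlety is the bookkeeping in unwinding $\preceq$ along the common prefix, but since $\preceq$ is defined by straightforward recursion on the first letter, a short induction on $|w|$ handles this in both directions; no genuine obstacle arises, and the argument is symmetric. The crux is simply the observation that a strict lex comparison is witnessed at the first position of difference, and the definition of $\leq$ makes such a witnessing position necessarily lie in $\Sigma$.
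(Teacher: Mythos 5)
Your proof is correct and follows essentially the same route as the paper's: both arguments reduce to the observation that two distinct equal-length words are comparable under $\preceq$ exactly when their first point of difference is a strict comparison $a<a'$, which by the construction of $\leq$ forces $a,a'\in\Sigma$. The only difference is presentational—you argue by contradiction and spell out the induction along the common prefix, which the paper leaves implicit.
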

\begin{proof}
If $w_1,w_2\in (\Sigma\cup \Gamma)^k$ first differ by an element of $\Sigma$,
say $w_1 = w a w'$, $w_2 = w a' w''$ with $a\neq a' \in \Sigma$.  Then without
loss of generality $a < a'$, so $w_1 \preceq w_2$.  Conversely, if $w_1 \preceq
w_2$, then write $w_1 = w x w', w_2 = w y w''$ for some $x\neq y \in \Sigma\cup
\Gamma$.  But then we must have $x < y$, and hence $x,y\in \Sigma$ and so
$w_1,w_2$ first differ by an element of $\Sigma$.
\end{proof}

We have thus reduced Problem \ref{prob:detgrowth} to the problem of calculating
the growth rate of the width of a regular language, with respect to this partial
order, a special case of the following problem.

\begin{problem}[Antichain growth for NFA]\label{prob:antgrowth}
Given a nondeterministic finite
automaton $\A$ over a finite partially ordered set $(\Sigma,\leq)$, determine
the growth rate of $w\left(L(\A)_{=k}\right)$, with respect to the
lexicographic order.
\end{problem}

%Note that to handle the variant of the `locked room' problem discussed above
%where we do not allow Alice to choose whether to allow Bob to communicate, and
%thus view sets as consistent if their elements first differ by an element of
%$\Gamma$, we simply expand the partial order $\leq$ by setting (say) $a\leq b$
%for every $a\in \Sigma, b\in \Gamma$.

The structure of the reductions in the preceding sections is shown in Figure
\ref{fig:chapstruct}.

Problem \ref{prob:antgrowth} is solved in \cite{mestel2018widths}, the
relevant results of which are summarised in Theorem \ref{thm:antgrowth}
(Theorems 16, 18, 25 and 28 of \cite{mestel2018widths}).

%\newpage

\begin{theorem}\label{thm:antgrowth}
Let $\A$ be an NFA over a partially ordered set $(\Sigma,\leq)$.  Then we have
the following:
\begin{enumerate}[(i)]
\item The antichain growth of $L(\A)$ is either polynomial or exponential.
That is, we have either $w(L(\A)_{=n}) = O(n^k)$ for some $k$ or
$w(L(\A)_{=n}) = \Omega(2^{\epsilon n})$ for some $\epsilon > 0$.
\item There is a polynomial-time algorithm to determine whether a given $\A$ has
polynomial or exponential antichain growth.
\item In the case of polynomial antichain growth, we have that
$w(L(\A)_{=n})=\Theta(n^k)$ for some integer $k$, and there is a
polynomial-time algorithm to compute $k$ for a given automaton.
\end{enumerate}
\end{theorem}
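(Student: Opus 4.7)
The plan is to analyze the structure of $\A$ via its product with itself, and use a pumping-lemma-style argument to produce a dichotomy. Observe first that two equal-length words $w_1,w_2$ in $(\Sigma\cup\Gamma)^*$ are comparable in the lexicographic order if and only if at the first position where they differ, both symbols lie in $\Sigma$; equivalently, they are incomparable iff they first differ at a position where at least one of the two symbols lies in $\Gamma$. Antichain growth is therefore controlled by how richly the NFA can generate runs that ``branch'' at $\Gamma$-positions.

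For the dichotomy (part (i)) and the algorithm (part (ii)), I would work in the product NFA $\A\times\A$ and mark as \emph{divergent} any transition $(p_1,p_2)\xrightarrow{(a_1,a_2)}(p_1',p_2')$ with $a_1\neq a_2$ and $(a_1\in\Gamma$ or $a_2\in\Gamma)$. Call a pair of equal-length parallel runs \emph{incomparable-pumping} if it visits a common diagonal state $(q,q)$, then a divergent transition, and eventually returns to some common diagonal state $(q',q')$. If such a structure is reachable from $(q_0,q_0)$ and co-reachable from $F\times F$, one can iterate the pump independently in $n$ copies to obtain $2^{\Omega(n)}$ pairwise incomparable accepted words of length $\Theta(n)$, yielding exponential width. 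Conversely, if no such pumping structure exists, every long accepted word admits a canonical decomposition in which all ``branchings'' happen within a bounded number of positions, and a straightforward counting argument bounds the width by a polynomial in $n$. Since the existence of the pumping structure is a reachability property in the polynomial-size graph $\A\times\A$ (restricted to divergent edges), it can be checked in polynomial time, giving part (ii).

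For the exact integer degree in part (iii), I would define a layered decomposition of $\A\times\A$ into strongly connected components (SCCs), keeping only those SCCs that are both reachable from $(q_0,q_0)$ and co-reachable from $F\times F$, and restricted to the non-divergent subgraph. Within each such SCC, width contributions come from how many distinct ``branch points'' can be inserted along a run; between SCCs, contributions multiply the length-polynomial degree. The integer $k$ then equals the maximum, over accepting paths in the SCC condensation, of the number of SCCs along that path which admit at least one non-trivially branching diagonal cycle. This is a longest-path computation in a DAG and can be carried out in polynomial time. The matching lower bound $\Omega(n^k)$ follows by explicitly constructing antichains using $k$ nested independent branch choices in the $k$ SCCs; the upper bound $O(n^k)$ follows from a careful count that shows any accepted word's ``branch profile'' is determined by at most $k$ integer parameters each bounded by $n$.

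The main obstacle will be part (iii): pinning down the exact integer $k$ and proving matching upper and lower bounds is considerably more delicate than the mere dichotomy. In particular, showing that the upper bound is $O(n^k)$ and not merely $n^{O(k)}$ requires a tight combinatorial accounting of how antichain elements are indexed by choices made in distinct SCCs, and ensuring that incomparability genuinely compounds across SCCs rather than collapsing due to unintended lexicographic comparisons; this is where the interplay between the linear order on $\Sigma$ and the flat incomparable part $\Gamma$ must be handled most carefully. Since the details of this analysis are the technical core of the companion paper \cite{mestel2018widths}, I would defer to it for the full proof and here simply invoke its results.
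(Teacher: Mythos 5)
You do not actually need a proof here, and neither does the paper: Theorem \ref{thm:antgrowth} is not proved in this work at all, but is quoted verbatim from the companion paper \cite{mestel2018widths} (its Theorems 16, 18, 25 and 28). So your closing move of simply invoking that paper is exactly what the author does, and is the appropriate treatment in this venue.

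However, the sketch you offer on the way cannot stand as even an outline, and the flaw is concrete. Your criterion for exponential growth --- a pair of parallel runs of $\A\times\A$ that passes through a diagonal state $(q,q)$, takes one divergent transition, and later reaches some diagonal state $(q',q')$, the whole configuration being reachable from $(q_0,q_0)$ and co-reachable from $F\times F$ --- is not sufficient, because nothing in it guarantees that the divergent excursion can be iterated. Take $a,d\in\Sigma$ and $b,c\in\Gamma$ and the language $a(b+c)d^*$: the product automaton has a divergent transition leading from the diagonal state reached after $(a,a)$ directly to another diagonal state, reachable and co-reachable, yet $w(L_{=n})=2$ for all $n\geq 2$, so both your dichotomy argument and your polynomial-time test (a reachability check for this structure) would misclassify it as exponential. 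To pump one needs the divergence to sit on a cycle, e.g.\ a reachable and co-reachable state $q$ carrying two equal-length loops labelled $v_1\neq v_2$ whose first difference is an incomparable pair of letters; only then are the $2^n$ words $u\alpha_1\cdots\alpha_n w$ with $\alpha_i\in\{v_1,v_2\}$ pairwise incomparable. Moreover, the genuinely hard content is the converse --- that in the absence of such a configuration the width is polynomial --- together with the exact integer degree in part (iii); calling the former a ``straightforward counting argument'' and asserting an SCC-counting formula for $k$ without proof leaves open precisely the parts that constitute the companion paper's technical core. Note also that the theorem is stated for an arbitrary partially ordered alphabet, whereas your comparability analysis covers only the linearly-ordered-$\Sigma$-plus-flat-$\Gamma$ case (the fix is routine: call a transition divergent when its two letters are incomparable). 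Since all of this is established in \cite{mestel2018widths}, cite it as the paper does, and either repair or drop the intermediate sketch.
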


Combining Theorem \ref{thm:antgrowth} with the reduction shown in Figure
\ref{fig:chapstruct} yields the main theorem of this work, that any SDFST has
either logarithmic or linear min-entropy capacity, and there is a
polynomial-time algorithm to distinguish the two cases (and determine the
constant for logarithmic capacity).

\begin{theorem}\label{thm:entcap}
Let $\TT = (Q,q_0,F,\Sigma_A\times\Sigma_B,\Gamma_A\times\Gamma_B,\delta,
\sigma)$ be an SDFST.  Then we have the following:
\begin{enumerate}[(i)]
\item The min-entropy capacity $\LL_n(\TT)$ is either logarithmic or linear.
That is, we have either $\LL_n(\TT) = O(\log n)$ or $\LL_n(\TT) = \Theta(n)$.
\item There is a polynomial-time algorithm to determine whether a given $\TT$
has logarithmic or linear capacity growth.
\item In the case of logarithmic capacity, we have that $\LL_n(\TT) \sim
k\log n$ for some integer $k$, and there is a polynomial-time algorithm to
compute $k$ for a given SDFST.
\end{enumerate}
\end{theorem}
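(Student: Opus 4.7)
The plan is to combine the chain of reductions summarised in Figure \ref{fig:chapstruct} with the antichain growth dichotomy of Theorem \ref{thm:antgrowth}. First, chaining Theorem \ref{thm:aliceposs}, Corollary \ref{cor:firstdiff}, Theorem \ref{thm:autdiff} and Theorem \ref{thm:infanti}, I obtain the identity
\[\LL_n(\TT) = \log w\bigl(L(\A_\TT)_{=2n}\bigr),\]
where the width is computed with respect to the lexicographic order induced by the partial order on $\Sigma_B\cup \Gamma_B$ described in Theorem \ref{thm:infanti} (an arbitrary linear order on $\Sigma_B$, with every element of $\Gamma_B$ pairwise incomparable with everything else). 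Since $\A_\TT$ has $|Q|(1+|\Gamma_B|)$ states, it is of polynomial size in $\TT$, so any polynomial-time procedure on $\A_\TT$ yields a polynomial-time procedure on $\TT$.

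Next, I apply Theorem \ref{thm:antgrowth} to $\A_\TT$. In the polynomial case, $w(L(\A_\TT)_{=m}) = \Theta(m^k)$ for some integer $k$, so substituting $m=2n$ gives
\[\LL_n(\TT) = \log \Theta\bigl((2n)^k\bigr) = k\log n + O(1) \sim k\log n,\]
and Theorem \ref{thm:antgrowth}(iii) provides a polynomial-time algorithm to extract $k$. In the exponential case, $w(L(\A_\TT)_{=m}) = \Omega(2^{\epsilon m})$ gives $\LL_n(\TT) = \Omega(n)$, while the trivial upper bound $w(L(\A_\TT)_{=2n}) \leq (|\Sigma_B|+|\Gamma_B|)^{2n}$ yields $\LL_n(\TT) = O(n)$; hence $\LL_n(\TT) = \Theta(n)$. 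This establishes the dichotomy in (i), and the classification algorithm of Theorem \ref{thm:antgrowth}(ii) supplies the polynomial-time test in (ii).

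The only real subtlety is bookkeeping: checking that the factor of $2$ in passing from length-$n$ Bob-traces to length-$2n$ $\A_\TT$-words preserves the asymptotic classification (trivial, since $\log((2n)^k) - k\log n = O(1)$), verifying that $\A_\TT$ has size polynomial in $\TT$, and confirming that the partial order used on $\Sigma_B \cup \Gamma_B$ in Theorem \ref{thm:infanti} is an instance of the partially ordered alphabets to which Theorem \ref{thm:antgrowth} applies. There is no genuinely hard step here: the theorem is essentially the culmination of the cited chain of reductions, with all the substantive combinatorial work packaged inside Theorem \ref{thm:antgrowth}.
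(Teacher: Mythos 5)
Your proposal is correct and is essentially the paper's own argument: the paper proves Theorem \ref{thm:entcap} exactly by combining the reduction chain of Figure \ref{fig:chapstruct} (Theorem \ref{thm:aliceposs}, Corollary \ref{cor:firstdiff}, Theorems \ref{thm:autdiff} and \ref{thm:infanti}) with the antichain-growth dichotomy of Theorem \ref{thm:antgrowth}. Your write-up just makes explicit the bookkeeping (the factor of $2$ in word length, the polynomial size of $\A_\TT$, and the trivial $O(n)$ upper bound in the exponential case) that the paper leaves implicit.
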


Note in particular that the information flow capacity is bounded if and only if 
$w(L(\A)_{=n})$ has polynomial growth of order 0.

Returning to the relay system shown in Figure \ref{fig:relay} at the beginning 
of this section, it is easy to see that the corresponding automaton shown in 
Figure \ref{fig:transaut} has exponential antichain growth, since in particular 
its language contains the exponential antichain $(aa'+ab')^*$.  We conclude that 
the system allows linear information flow, which is as expected since in $n$ 
steps Alice can transmit $n$ independent bits to Bob.

We claim that the cases of linear and logarithmic information flow can in some 
sense be interpreted as `dangerous' and `safe' respectively.  That linear 
information flow is dangerous should require no explanation: it offers an 
adversary an exponential speedup over exhaustive guessing of a secret (for instance 
a cryptographic key).  On the other hand, if the information flow in time $n$ is 
only proportional to $\log n$, then this offers the adversary at most a 
polynomial speedup over exhaustive guessing.

Of course it will not be appropriate in every situation to regard logarithmic 
antichain growth as `safe', and for instance we may sometimes be more interested in the 
precise amount of information flow that can occur in a fixed time $n$.  This is 
given by $w(L(\A_\TT)_{=n})$, which can be computed by a straightforward dynamic 
programming algorithm at the cost of determinising $\A_\TT$; see p.89 of the 
author's PhD thesis~\cite{mestelthesis} for details.  Whether there is 
an algorithm which is polynomial in $n$ and the size of $\A_\TT$ (as an NFA) is 
an open problem.

%We have thus achieved the goal expressed by Ryan, McLean, Millen and Gligor in 
%\cite{ryan2001noninterference}: `characterization of unbounded channels is 
%suggested as the kind of goal that would advance this subject'.

\subsection{Example: a simple scheduler}

We now illustrate the theory of the preceding two sections by applying it to 
analyse a simple scheduler. 
A resource is shared between Alice and Bob, and at each step Alice can transmit 
$a$, signifying that she wishes to use the resource, or $b$, signifying that she 
does not.  She receives back either an $a'$, signifying that she was succesful, 
or a $b'$, signifying that she was not (if she did not ask to use the resource 
then she always receives a $b'$).  The interface for Bob is similar but with 
primed and unprimed alphabets reversed.

Initially, Bob has priority over the use of the system, and for as long as
Alice transmits $b$ he retains it.  However, as soon as Alice seeks to use the
system by transmitting an $a$ she obtains priority and retains it for as long as
she uses it continuously.  As soon as she transmits a $b$ priority shifts back
to Bob, who retains it for the remainder of the execution.  

The transducer $\TT$ corresponding to this system is depicted in Figure
\ref{fig:interrupt} (where missing arguments mean that the input from that user
is ignored).

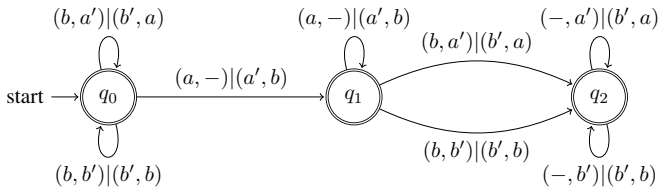
\begin{figure}[htb]
\centering
\resizebox{0.5\textwidth}{!}{
\begin{tikzpicture}[shorten >=1pt,node distance=4cm,on grid,auto]
   \node[state,initial, accepting] (q_0)   {$q_0$};
   \node[state,accepting] (q_1) [right=of q_0] {$q_1$};
   \node[state,accepting] (q_2) [right=of q_1] {$q_2$};
   \path[->]
    (q_0) edge [loop above] node {$(b,a')|(b',a)$} (q_0)
          edge [loop below] node {$(b,b')|(b',b)$} (q_0)
    (q_0) edge node [above] {$(a,-)|(a',b)$} (q_1)
    (q_1) edge [loop above] node {$(a,-)|(a',b)$} (q_1)
    (q_1) edge [bend left=25] node [above] {$(b,a')|(b',a)$} (q_2)
    (q_1) edge [bend right=25] node [below] {$(b,b')|(b',b)$} (q_2)
    (q_2) edge [loop above] node {$(-,a')|(b',a)$} (q_2)
          edge [loop below] node {$(-,b')|(b',b)$} (q_2);
\end{tikzpicture}
}
\caption{An interrupt system.}\label{fig:interrupt}
\end{figure}

We can now apply Definition \ref{def:transaut} to construct the corresponding
automaton $\A$, which is shown in Figure \ref{fig:interruptaut}.  By Theorems 
\ref{thm:autdiff} and \ref{thm:infanti} we have that $\LL_n(\TT) = w(L(\A)_{=k})$,
where $L(\A)$ is given the lexicographic order with the primed letters linearly 
ordered and the unprimed letters incomparable.

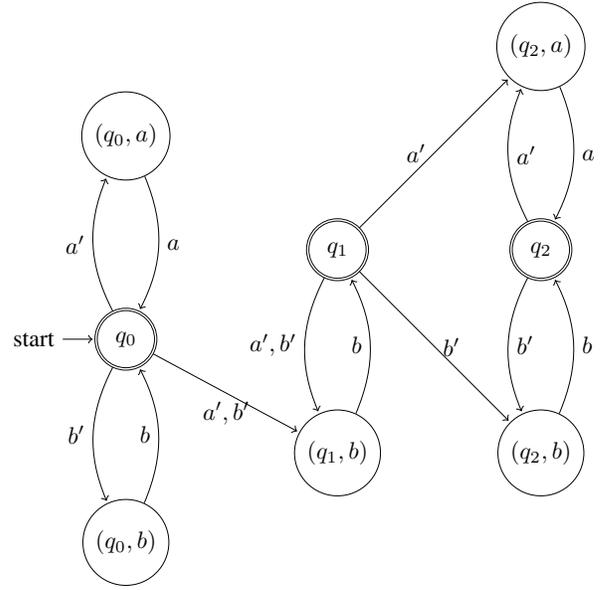
\begin{figure}[htb]
\centering
\resizebox{0.45\textwidth}{!}{
\begin{tikzpicture}[shorten >=1pt,node distance=3cm,on grid,auto]
   \node[state,initial, accepting] (q_0)   {$q_0$};
   \node[state] (q_0a) [above=of q_0] {$(q_0,a)$};
   \node[state] (q_0b) [below=of q_0] {$(q_0,b)$};
   \node[state, accepting] (q_1) [above right=of q_0,yshift=-0.8cm,xshift=1cm]   {$q_1$};
   \node[state] (q_1b) [below=of q_1] {$(q_1,b)$};
   \node[state, accepting] [right=of q_1] (q_2)   {$q_2$};
   \node[state] (q_2a) [above=of q_2] {$(q_2,a)$};
   \node[state] (q_2b) [below=of q_2] {$(q_2,b)$};

   \path[->]
    (q_0)  edge [bend left=25] node [left] {$a'$} (q_0a)
           edge [bend right=25] node [left] {$b'$} (q_0b)
    (q_0a) edge [bend left=25] node [right] {$a$} (q_0)
    (q_0b) edge [bend right=25] node [left] {$b$} (q_0)
    (q_0)  edge  node [below] {$a',b'$} (q_1b)
    (q_1b) edge [bend right=25] node [left] {$b$} (q_1)
    (q_1)  edge [bend right=25] node [left] {$a',b'$} (q_1b)
    (q_1)  edge node [left] {$a'$} (q_2a)
    (q_2)  edge [bend left=25] node [right] {$a'$} (q_2a)
    (q_2a) edge [bend left=25] node [right] {$a$} (q_2)
    (q_2)  edge [bend right=25] node [right] {$b'$} (q_2b)
    (q_2b) edge [bend right=25] node [right] {$b$} (q_2)
    (q_1)  edge node [right] {$b'$} (q_2b);
\end{tikzpicture}
}
\caption{Automaton corresponding to the interrupt system transducer shown in Figure
\ref{fig:interrupt}.}\label{fig:interruptaut}
\end{figure}

By the criteria in Theorems 16 and 28 of \cite{mestel2018widths}, this automaton 
has polynomial antichain growth of order 2, and so the system has logarithmic 
information flow, with $\LL_n(\TT)\sim 2\log n$ (see Section 6.5.2 of 
\cite{mestelthesis} for a more detailed discussion).  Note that this
makes intuitive sense: Alice can choose when to start using the resource and
when to stop, which she can do in $\binom{n}{2} = \Theta(n^2)$ ways. 

\section{Nondeterministic, multi-agent and probabilistic
systems}\label{sec:future}

In this section we describe some open problems relating to various
generalisations of the deterministic, two-agent systems considered in Section
\ref{sec:detint}.

\subsection{Nondeterministic systems}

In Section \ref{sec:detint} we considered only deterministic systems.  More
generally, however, we may be interested in systems which are nondeterministic:

\begin{definition} A \emph{synchronised nondeterministic finite-state
transducer} (SNDFST) is a 6-tuple $\TT=(Q,q_0,F,\Sigma,\Gamma,\Delta)$, where
$Q, q_0$ and $F$ are as in the definition of DFST, and $\Delta\subseteq Q\times
\Sigma \times Q \times \Gamma$ is the \emph{transition relation}.
\end{definition}

Similarly to before we say that $(a_1a_2\ldots a_k,b_1b_2\ldots b_k) \in
\Sigma^*\times \Gamma^*$ is accepted by $\TT$ if there exists a sequence of
states $q_1\ldots q_k\in Q^*$ such that $q_k\in F$ and for every $0 \leq i < k$
we have $(q_i, a_i, q_{i+1}, b_i) \in \Delta$.  As before we will consider
systems for which $\Sigma=\Sigma_A\times \Sigma_B$ and $\Gamma=\Gamma_A \times
\Gamma_B$, representing the inputs and outputs of Alice and Bob respectively.

The question then arises of how the nondeterminism in the system should be
interpreted.  One option is to consider it is essentially `demonic'---that is,
under the control of Alice and available to be used to convey information to
Bob.  This precisely corresponds to Definition \ref{def:intinf}, which can be
adopted wholesale, and a construction similar to that in Definition
\ref{def:transaut} can be used to produce an NFA $\A_\TT$ such that the capacity
of $\TT$ is equivalent to the antichain growth of $\A$.  We therefore have that
Theorem \ref{thm:entcap} holds also for nondeterministic systems interpreted in
this way.

However, the assumption of demonic nondeterminism may in some circumstances be
too pessimistic.  In particular, it may sometimes be reasonable to assume that
the way the nondeterminism is resolved depends only on the previous events, and
not on Alice's secret.  Equivalently, we may imagine that the resolution of the
nondeterminism is controlled by an `innocent' third party who is isolated from
both Alice and Bob (but is able to see their inputs and outputs).  We thus have
that if we can handle deterministic systems with multiple agents then we will be
able to handle nondeterministic systems with this interpretation.

\subsection{Multi-agent systems}

We will model multi-agent systems as SDSFTs, as before, but now with 
\begin{align*} \Sigma &=
\Sigma_A\times \Sigma_B \times \Sigma_1 \times \ldots \times \Sigma_k \text{
and}\\
\Gamma &= \Gamma_A \times \Gamma_B \times \Gamma_1 \times \ldots \times
\Gamma_k\end{align*}
for some $k$, where the $\Sigma_i$ and $\Gamma_i$ represent the inputs and
outputs respectively to the $i$th `innocent' agent.  We will call such a system
a $k$-SDSFT.

We will now require that the $k$ innocent agents choose distributions over
strategies.  An argument similar to Proposition \ref{prop:genpure} shows that we
may assume that the innocent agents select deterministic strategies, and so we
adopt a definition analagous to Definition \ref{def:intinf}.

\begin{definition}
Let $\TT$ be a $k$-SDFST.  We define
\begin{multline*}
\LL_n(\TT) = \max_{\substack{x_B\in \XX_B, \\ x_1\in \XX_1,\ldots,x_k \in \XX_k}} \log \left|
\left\{ y\in (\Sigma_B\times \Gamma_B)^k \middle\vert \exists x_A \in \XX_A:
\right.\right. \\ \left.\left. \text{$y$ is consistent with $\TT,x_A,x_B,x_1,\ldots,x_k$} \right\}
\right|,
\end{multline*}
where $\XX_i$ is the set of functions $(\Sigma_i\times \Gamma_i) \rightarrow
\Sigma_i$, and consistency is defined similarly to Definition \ref{def:cons}.
\end{definition}

Our first open problem is to compute the min-entropy capacity of a $k$-SDFST.
We conjecture that there should still be a dichotomy between polynomial and
exponential growth.

\subsection{Probabilistic systems}

We may also wish to handle systems whose behaviour is probabilistic.  We model
such systems as probabilistic finite-state transducers.

\begin{definition}
A \emph{probabilistic finite-state transducer} is a tuple $\TT=(Q,q_0,\Sigma,
\Gamma,\Delta)$, where $Q$ is a finite set of \emph{states}, $q_0\in Q$ is the
\emph{initial state} and $\Delta:Q\times \Sigma\times Q \times \Gamma
\rightarrow \mathbb{R}^{\geq 0}$ is the \emph{transition function}, such that
for all $q\in Q$ and all $a\in \Sigma$ we have
\[\sum_{b\in \Gamma, q'\in Q} \Delta(q,a,q',b) = 1.\]
\end{definition}

We interpret $\Delta(q,a,q',b)$ as the probability that on receiving the input
$a$ in state $q$, the system outputs $b$ and moves to state $q'$.

As before we require that $\Sigma$ and $\Gamma$ are of the form $\Sigma_A\times
\Sigma_B$ and $\Gamma_A\times \Gamma_B$ respectively (although of course it
would also be possible to consider multi-agent probabilistic systems), and the
sets $\XX_A$ and $\XX_B$ are as before.  For fixed $x_A$ and $x_B$, the output
$Y$ produced to Bob after $n$ steps is a sequence $y\in (\Sigma_B\times
\Gamma_B)^n$, where $y=((a'_1,b'_1),\ldots,(a'_n,b'_n))$ occurs with probability
\[\sum_{((a_1,b_1),\ldots, (a_n,b_n))\in Z} \sum_{q_1,\ldots,q_n \in Q}
\prod_{i=1}^n \Delta(q_{i-1},(a_i,a'_i),q_i,(b_i,b'_i)),\]
where $Z$ is the set of $((a_1,b_1),\ldots,(a_n,b_n)) \in (\Sigma_A\times
\Gamma_A)^n$ such that $((a_1,a'_1),(b_1,b'_1))\ldots \allowbreak
((a_n,a'_n),(b_n,b'_n))$ is consistent with $x_A$ and $x_B$.

This defines an interactive channel $\C_n$, and so our second open problem is to
compute the growth of $\LL_\infty(\C_n)$.  This seems to be a rather formidable
challenge since we lack a way to reduce to a possibilistic view of Alice's
actions, and so we may genuinely have to quantify over probability distributions for
$X_A$.

\section{Related work}\label{sec:relatedwork}

So far as we are aware this is the first quantitative study which is able to
analyse interactive systems in full generality, that is to say where inputs may
be provided by both parties, according to distributions chosen adversarially so
as to maximise information flow, rather than being specified as part of the
system.

Mardziel, Alvim, Hicks and Clarkson in \cite{mardziel2014quantifying} consider
interactive systems in essentially the same model as we use in Section
\ref{sec:detint} of this paper: they represent the system by a
probabilistic finite automaton, which is executed in a `context' consisting of
the strategy functions for the high and low users. They then employ
probabilistic programming to analyse particular systems with respect to
particular contexts, demonstrating for instance that allowing an adaptive
adversary can greatly increase information flow. However, they acknowledge that
they are not able to analyse the maximum leakage over all possible contexts,
instead observing that `We consider such worst-case reasoning challenging future
work'. The present work addresses this question for the case where the system is
deterministic.

K\"{o}pf and Basin in \cite{kopf2007infomodel} show how to calculate information
leakage for a particular model relating to side-channel attacks in which the
attacker is repeatedly permitted to make queries drawn from some fixed set. They
give an exhaustive algorithm to compute the maximum amount of information
leakege after $n$ queries.  

Boreale and Pampaloni in \cite{boreale2015quantitative} consider the case of 
repeated queries issued by the attacker (possibly adaptively) to
a stateless system and show that under certain reasonable assumptions the problem
of computing the maximum leakage after $n$ queries is NP-hard. In
\cite{boreale2011asymptotic}, the same authors together with Paolini study the
asymptotics of the leakage resulting from $n$ independent uses of a single
channel for large $n$. This is in some sense dual to the situation we have 
considered, of the asymptotics of a single, long execution of a stateful system.

In \cite{andres2010computing}, Andr{\'e}s, Palamidessi, van Rossum and Smith
compute the leakage of what they term `interactive information-hiding systems'
(IIHS), which are essentially automata over (secret) inputs and (observed)
outputs. However, they assume an essentially passive attacker: apart from the
values of the secret (whose distribution they sometimes allow to be chosen so as
to maximise information flow), the system is assumed to follow known
probabilistic behaviour. In follow-up work \cite{alvim2012quantitative}, Alvim,
Andr{\'e}s and Palamidessi demonstrate interesting connections between the
mutual information capacity of such systems and the directed information
capacity of channels with feedback, although this is of limited practical
significance since it is now recognised that mutual information is not generally an
appropriate measure of information flow.

An interesting alternative algorithmic approach is taken by Kawamoto and
Given-Wilson in \cite{kawamoto2015quantitative}, although for a completely
different problem from that addressed in this work. In
\cite{kawamoto2015quantitative}, the authors consider a purely passive
observer who is shown the outputs of two channels, interleaved according to some
scheduler; the goal is to find a scheduler which minimises the information
leakage. They show that this can be expressed as a linear programming
problem, and therefore solved in time polynomial in the number of possible
interleavings, which unfortunately is exponential in the number of possible
traces.

\section{Conclusions}\label{sec:conclusion}

In \cite{ryan2001noninterference}, Ryan, McLean, Millen and Gligor write the 
following:
\begin{quote}
Even at a theoretical level where timings are not available, and a bit per 
millisecond is not distinguishable from a bit per fortnight or a bit per century, 
a channel that compromises an unbounded amount of information is substantially 
different from one that cannot.  Characterization of unbounded channels is 
suggested as the kind of goal that would advance the study of this subject
\end{quote}
In Theorem \ref{thm:entcap} we have achieved this goal for deterministic 
systems, and in fact slightly more: we have shown that even among unbounded 
channels there is a dichotomy between `safe' and `dangerous' information flow, 
and this can be determined for a given system in polynomial time.

Having characterised the notion of safe versus dangerous information flow, one 
may ask about the question of enforcement of the safety criterion.  In one sense 
this question is already answered by Theorem \ref{thm:entcap}, since it includes 
a polynomial-time algorithm to determine whether the condition is satisfied for 
a given system.  However, the development of automated tools implementing this 
algorithm, which preferably would allow realistic systems to be specified using more convenient notation 
than the rather abstract mathematical formalism of finite-state transducers, is 
certainly an important area for future work.

\subsection*{Acknowledgements}

The author is grateful to Catuscia Palamidessi for helpful comments on an 
earlier version of this work, and to Dimiter Ostrev for comments on the final 
version.

\bibliographystyle{ieeetran}
\bibliography{mybib}

\end{document}